    \newcommand{\N}{\mathbb{N}}
    \newcommand{\E}{\mathbb{E}}
    \newcommand{\R}{\mathbb{R}}
    \newcommand{\norm}[1]{\left\lVert #1 \right\rVert}
\newtheorem{theorem}{Theorem}[section]
  \newtheorem{lemma}[theorem]{Lemma}
  \theoremstyle{definition}
  \newtheorem*{remark}{Remark}
	\renewenvironment{proof}{\textbf{\emph{Proof.}}}{\qed}
	\renewenvironment{remark}{\textbf{\emph{Remark.}}}
\def\blfootnote{\xdef\@thefnmark{}\@footnotetext}
\begin{document}

\title{\bf A new characterization of the Gamma distribution and associated goodness of fit tests  
}

\author{Steffen Betsch         \and
        Bruno Ebner 
}


\date{\today}
\maketitle

\blfootnote{ {\em American Mathematical Society 2010 subject
classifications.} Primary 62G10 Secondary 60F05}
\blfootnote{
{\em Key words and phrases} Goodness-of-fit, amma Distribution, Stein's Method, Density Approach, Contiguous Alternatives}

\begin{abstract}
We propose a class of weighted $L_2$-type tests of fit to the Gamma distribution. Our novel procedure is based on a fixed point property of a new transformation connected to a Steinian characterization of the family of Gamma distributions. We derive the weak limits of the statistic under the null hypothesis and under contiguous alternatives. Further, we establish the global consistency of the tests and apply a parametric bootstrap technique in a Monte Carlo simulation study to show the competitiveness to existing procedures.
\end{abstract}

\section{Introduction}
\label{Intro}
Testing the goodness-of-fit of data to a Gamma distribution is a first step to serious statistical inference involving this model. Due to its versatile nature the Gamma distribution generalizes the exponential, the $\chi^2$ and the Erlang distribution. Applications include modelling rainfall data in Africa, see \cite{HMF:2007}, or honey bee transit times, see \cite{PS:2017}. To be specific, let $X_1,\ldots,X_n,\ldots$ be a sequence of independent and identically distributed (i.i.d.) copies of a positive random variable $X$, all random variables being defined on a common underlying probability space $(\Omega,\mathcal{A},\mathbb{P})$. Writing $\Gamma(k,\lambda)$ for the Gamma distribution with shape parameter $k>0$ and scale parameter $\lambda>0$ as well as $\mathbf{\Gamma}=\{\Gamma(k,\lambda) \, | \, k,\lambda>0\}$ for the family of Gamma distributions, we want to check the assumption that the distribution of $X$ belongs to $\mathbf{\Gamma}$, or equivalently, test the composite hypothesis
\begin{equation*}
	H_0: \mathbb{P}^{X}\in \mathbf{\Gamma}
\end{equation*}
against general alternatives. This testing problem has been considered in the statistical literature. In recent papers, \cite{BG:2015,PS:2017,WM:2008} proposed tests based on some independence properties of the Gamma distribution, \cite{HME:2012} considered a test based on the empirical Laplace transform, \cite{KKC:2006} proposed a method using the empirical moment generating function and \cite{VG:2015} suggested to use a variance ratio test. Most of these tests are built upon a characterization of the Gamma distribution, while classical 'omnibus' methods like the Kolmogorov-Smirnov test, the Cram\'{e}r-von Mises test, the Anderson-Darling test or the Watson test utilize the (weighted) distance of the empirical to the theoretical distribution function in some suitable function space. For a review see \cite{delBarrio2000}. Note that, since $H_0$ stands for testing the fit to a whole family of distributions, every statistic relies on some adequate estimator $(\widehat{k},\widehat{\lambda})$ of $(k,\lambda)$ to choose a 'best' representative of $\mathbf{\Gamma}$.

Our novel idea for this testing problem is to use a fixed point property of a transformation connected to a Stein-type characterization for the Gamma distribution. The family of test statistics is then based on the weighted distance from the empirical transformation to the empirical distribution function. \\

Inspired by the density approach as introduced in \cite{CGS:2011}, section 4, and \cite{LS:2013}, section 2, we first state a characterizing Steinian equation for the Gamma distribution. For $(k, \lambda) \in (0,\infty)^2$ we denote by
\begin{equation*}
	p(t) = p(t,k,\lambda) = \frac{\lambda^{-k}}{\Gamma(k)} \, t^{k-1} \, e^{- t / \lambda}, \quad  t > 0,
\end{equation*}
the density function of the Gamma distribution $\Gamma(k,\lambda)$ with shape parameter $k$ and scale parameter $\lambda$. We write $\mathcal{F}$ for the set of all functions $f : (0, \infty) \to \R$ that are differentiable, satisfy
\begin{equation*}
	\lim_{x \, \searrow \, 0} f(x)
	= \lim_{x \, \searrow \, 0} f(x) p(x)
	= \lim_{x \, \to \, \infty} f(x) p(x)
	= 0
\end{equation*}
and have a locally integrable derivative.
\begin{theorem} \label{thm SC}
	A positive random variable $X$ has a $\Gamma(k,\lambda)$-distribution (with parameters $k,\lambda>0$) if, and only if,
	\begin{equation} \label{density approach}
		\E \left[ f^{\prime}(X) + \left( \frac{k - 1}{X} - \frac{1}{\lambda} \right) f(X) \right] = 0
	\end{equation}
	for every $f \in \mathcal{F}$ for which this expectation exists.
\end{theorem}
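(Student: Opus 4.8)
The crucial observation is the score identity. Since $\log p(t) = -k\log\lambda - \log\Gamma(k) + (k-1)\log t - t/\lambda$, differentiation yields
$$\frac{p'(t)}{p(t)} = \frac{k-1}{t} - \frac{1}{\lambda}, \qquad t > 0,$$
so that the bracketed expression in \eqref{density approach} is exactly $f'(x) + \frac{p'(x)}{p(x)}\,f(x) = \frac{(fp)'(x)}{p(x)}$. For the ``only if'' direction I would assume $X \sim \Gamma(k,\lambda)$ and write the expectation as an integral against $p$; by the score identity the integrand collapses to $(fp)'(t)$, giving
$$\E\!\left[ f'(X) + \Big(\tfrac{k-1}{X} - \tfrac{1}{\lambda}\Big) f(X)\right] = \int_0^\infty (fp)'(t)\,dt.$$
Because the relevant expectation is assumed to exist, $(fp)'$ is integrable on $(0,\infty)$, and $fp$ is (locally) absolutely continuous with integrable derivative; hence the integral equals $\lim_{x\to\infty}(fp)(x) - \lim_{x\searrow 0}(fp)(x)$, both limits vanishing by the defining properties of $\mathcal F$.

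For the ``if'' direction, which is the substantive part, I would fix $z > 0$ and solve the Stein equation
$$f_z'(x) + \Big(\tfrac{k-1}{x} - \tfrac{1}{\lambda}\Big) f_z(x) = \mathbf 1\{x \le z\} - P_{k,\lambda}(z), \qquad P_{k,\lambda}(z) := \int_0^z p(t)\,dt,$$
for $f_z$. Using $p$ as integrating factor turns the left-hand side into $(f_z p)'/p$, so
$$f_z(x) = \frac{1}{p(x)} \int_0^x \big(\mathbf 1\{t \le z\} - P_{k,\lambda}(z)\big)\, p(t)\,dt.$$
Inserting $f_z$ into \eqref{density approach} reduces the left-hand side to $\E[\mathbf 1\{X \le z\}] - P_{k,\lambda}(z) = \PP(X \le z) - P_{k,\lambda}(z)$; forcing this to vanish for every $z > 0$ identifies the distribution function of $X$ with that of $\Gamma(k,\lambda)$, which is the assertion.

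The main obstacle is to verify that each $f_z$ genuinely belongs to $\mathcal F$ and that the associated expectation exists, so that the hypothesis may legitimately be applied. Existence is immediate, since by construction $f_z'(x) + (\tfrac{k-1}{x}-\tfrac{1}{\lambda})f_z(x)$ equals the bounded function $\mathbf 1\{x\le z\} - P_{k,\lambda}(z)$. For membership I would study $g_z(x) := f_z(x)\,p(x) = \int_0^x (\mathbf 1\{t\le z\} - P_{k,\lambda}(z))\, p(t)\,dt$: it visibly tends to $0$ as $x \searrow 0$, and as $x \to \infty$ it tends to $\int_0^\infty (\mathbf 1\{t\le z\} - P_{k,\lambda}(z))\, p(t)\,dt = P_{k,\lambda}(z) - P_{k,\lambda}(z) = 0$, which is precisely why the centering constant $P_{k,\lambda}(z)$ is indispensable. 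It then remains to control $f_z(x) = g_z(x)/p(x)$ near the origin: using $p(x) \sim c\,x^{k-1}$ and $\int_0^x p \sim c'\,x^{k}$ as $x \searrow 0$ one finds $f_z(x) = O(x) \to 0$, so $f_z$ extends continuously to $0$ with value $0$. Differentiability on $(0,\infty)\setminus\{z\}$ and local integrability of $f_z'$ follow from the explicit formula, the jump of $\mathbf 1\{\cdot\le z\}$ at $z$ producing only a harmless corner. Combining these verifications with the two directions above completes the proof.
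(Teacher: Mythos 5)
Your proposal is correct and follows essentially the same route as the paper's own proof: necessity via the score identity $p'/p=(k-1)/x-1/\lambda$ and the fundamental theorem of calculus, and sufficiency via the identical Stein-equation solution $f_z(x)=p(x)^{-1}\int_0^x\big(\mathds{1}\{t\le z\}-P(z)\big)\,p(t)\,\mathrm{d}t$, with the same verifications (vanishing of $f_z p$ at both endpoints, the $O(x)$ bound near the origin, local integrability of $f_z'$, and then $\PP(X\le z)=P(z)$ for all $z$). Your explicit acknowledgement of the non-differentiability at the corner $x=z$ is, if anything, more candid than the paper, which asserts differentiability of $f_t$ outright; both treatments otherwise coincide.
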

For completeness, and as we have changed the class of test functions $\mathcal{F}$ as compared to \cite{LS:2013}, a proof is given in the appendix. \\

The characterization of the Gamma distribution given in Theorem \ref{thm SC} is not directly accessible for the proposal of a goodness-of-fit test, since (\ref{density approach}) depends on the class of functions $\mathcal{F}$. We tackle this problem by the following fixed point property.


\begin{theorem} \label{thm fixed point statement}
	Let $X$ be a positive random variable with distribution function $F$ and assume $\E |X| < \infty$. For $k, \lambda > 0$ define $T^X : \R \to \R$ by
	\begin{align*}
		T^X (t) = \E \left[ \left(- \frac{k - 1}{X} + \frac{1}{\lambda} \right) \min \{ X, t \} \right], \quad  t > 0,
	\end{align*}
	and set $T^X(t) = 0$ for $t \leq 0$. Then $X$ follows the $\Gamma(k, \lambda)$-law if, and only if, $T^X \equiv F$ on $\R$.
\end{theorem}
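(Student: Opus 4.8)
The plan is to establish the equivalence $T^X \equiv F$ through the Steinian characterization of Theorem \ref{thm SC}. The core idea is to recognize $T^X(t)$ as the expectation in (\ref{density approach}) applied to a suitable family of test functions indexed by $t$, and then to relate this to $F(t) = \PP(X \leq t)$.

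First I would fix $t > 0$ and introduce the function $f_t(x) = \min\{x, t\}$, or rather a smoothed/antiderivative version that lies in $\mathcal{F}$. Observe that $f_t$ is differentiable except at $x = t$ with $f_t'(x) = \mathds{1}\{x < t\}$, so that $\E[f_t'(X)] = \PP(X < t) = F(t^-)$, which equals $F(t)$ off the at most countably many atoms of $F$. Writing out (\ref{density approach}) for this $f_t$ gives
\begin{equation*}
	\E\left[\mathds{1}\{X < t\}\right] + \E\left[\left(\frac{k-1}{X} - \frac{1}{\lambda}\right)\min\{X,t\}\right] = 0,
\end{equation*}
which rearranges precisely to $F(t) = T^X(t)$. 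Thus, if $X \sim \Gamma(k,\lambda)$, Theorem \ref{thm SC} yields $T^X \equiv F$ on $(0,\infty)$; for $t \leq 0$ both sides vanish by definition and since $X > 0$, so the identity holds on all of $\R$.

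For the converse, I would assume $T^X \equiv F$ and aim to recover (\ref{density approach}) for every admissible $f \in \mathcal{F}$. The key step is an approximation argument: the functions $\min\{x, t\}$ (equivalently their derivatives, the indicators $\mathds{1}\{x < t\}$) form a generating class, and by taking finite linear combinations and limits one can approximate a general $f' \in \mathcal{F}$. More concretely, for $f \in \mathcal{F}$ one writes $f(x) = \int_0^x f'(s)\,ds$ using the boundary condition $f(0^+) = 0$, then applies Fubini to rewrite $\E[(\frac{k-1}{X} - \frac{1}{\lambda})f(X)]$ as an integral over $t$ of $T^X$-type quantities against $f'(t)\,dt$. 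Substituting $T^X \equiv F$ and reversing the Fubini step reproduces $\E[f'(X)]$, giving (\ref{density approach}); Theorem \ref{thm SC} then forces $X \sim \Gamma(k,\lambda)$. The integrability hypothesis $\E|X| < \infty$ is what guarantees $T^X$ is finite and the Fubini interchange is justified.

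The main obstacle I anticipate is the careful treatment of the boundary behavior and integrability when passing between the single-index identity $F = T^X$ and the full class $\mathcal{F}$. In particular, $f_t(x) = \min\{x,t\}$ does not itself satisfy the decay condition $\lim_{x \to \infty} f(x)p(x) = 0$ trivially (it grows unboundedly as $t \to \infty$, though for fixed $t$ it is bounded), and its derivative has a jump, so some care is needed to certify that it qualifies as a valid test function in $\mathcal{F}$ or is an admissible limit of such. Handling the at most countably many discontinuity points of $F$ and confirming that the identity at the continuity points suffices to conclude distributional equality is the delicate measure-theoretic point; everything else is a direct computation via Fubini's theorem combined with the already-established Theorem \ref{thm SC}.
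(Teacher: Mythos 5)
Your proposal is correct in outline, and its converse direction coincides with the paper's own proof: assuming $T^X \equiv F$, you use Fubini (justified by $\E|X|<\infty$) to write $\E\big[\big(-\tfrac{k-1}{X}+\tfrac{1}{\lambda}\big)f(X)\big] = \int_0^\infty f'(s)\, d^X(s)\,\mathrm{d}s$ with $d^X(s) = \E\big[\big(-\tfrac{k-1}{X}+\tfrac{1}{\lambda}\big)\mathds{1}\{X>s\}\big]$, identify $d^X$ as the Lebesgue density of $F$, conclude $\int_0^\infty f'(s)\,d^X(s)\,\mathrm{d}s = \E[f'(X)]$, and invoke Theorem \ref{thm SC}; the paper makes explicit one point you leave implicit, namely that monotonicity of $T^X \equiv F$ forces $d^X \geq 0$ a.e.\ and dominated convergence gives total mass $1$, so that $d^X$ really is a probability density. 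Where you genuinely differ is the forward direction. You apply the Stein identity (\ref{density approach}) directly to (a smoothing of) $f_t(x)=\min\{x,t\}$ and read off $F(t)=T^X(t)$ pointwise, whereas the paper computes $d^X(t) = -\int_t^{\infty} p'(x,k,\lambda)\,\mathrm{d}x = p(t,k,\lambda) \geq 0$, shows $d^X$ is the density of $T^X$, derives $\int_{\R} f'(s)\,d^X(s)\,\mathrm{d}s = \E[f'(X)]$ for all $f \in \mathcal{F}$, and concludes $T^X \equiv F$ from the fact that derivatives of functions in $\mathcal{F}$ separate probability measures. Your route is more elementary — no separating-class fact is needed — but the cost is exactly the approximation step you flag: $\min\{\cdot,t\}$ is not differentiable at $x=t$, hence not in $\mathcal{F}$, so you must smooth it and pass to the limit by dominated convergence. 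That step is routine here, and two of your anticipated worries in fact evaporate: for fixed $t$ the function is bounded by $t$, so the decay condition $\lim_{x\to\infty}f(x)p(x)=0$ holds trivially; and in this direction $F$ is the $\Gamma(k,\lambda)$ distribution function, hence continuous, so $F(t^-)=F(t)$ and there are no atoms to handle (atoms are equally moot in the converse, where $T^X \equiv F$ forces $F$ to be absolutely continuous). With the smoothing carried out, your proof is complete and valid.
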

\begin{proof}
	For the necessity note first that if $X \sim \Gamma(k,\lambda)$, we have
	\begin{align*}
		\E \left[ \left( - \frac{k - 1}{X} + \frac{1}{\lambda} \right) \mathds{1} \{ X > t \} \right]
		&= - \int_0^{\infty} \frac{\partial}{\partial x} \, p(x, k, \lambda) \, \mathds{1} \{ x > t \} \, \mathrm{d} x \\
		&= p(t, k, \lambda) \\
		&\geq 0
	\end{align*}
	for all $t > 0$. Therefore, setting
	\begin{align*}
		d^X(t) = \E \left[ \left( - \frac{k - 1}{X} + \frac{1}{\lambda} \right) \mathds{1} \{X > t\} \right] \, \mathds{1}_{(0,\infty)} (t) , \quad t \in \R,
	\end{align*}
	and realizing that Fubini's Theorem implies
	\begin{align} \label{proof eq 1}
		\int_{\R} d^X (s) \, \mathrm{d}s
		&= \E \left[ \left( - \frac{k - 1}{X} + \frac{1}{\lambda} \right) \int_0^{\infty} \mathds{1} \{ X > s \} \, \mathrm{d}s \right] \nonumber \\
		&= -(k - 1) + \frac{1}{\lambda} \, \E X \nonumber \\
		&= 1,
	\end{align}
	we notice that $d^X$ is the density function corresponding to $T^X$ since
	\begin{align} \label{proof eq 2}
		\int_{-\infty}^{t} d^X (s) \, \mathrm{d}s
		= \E \left[ \left( - \frac{k - 1}{X} + \frac{1}{\lambda} \right) \int_0^{t} \mathds{1} \{ X > s \} \, \mathrm{d}s \right]
		= T^X (t)
	\end{align}
	holds for all $t \in \R$. Now, using the fundamental theorem of calculus and (\ref{density approach}), we get, for any $f \in \mathcal{F}$,
	\begin{align*}
		\int_{\R} f^{\prime}(s) \, d^X(s) \, \mathrm{d}s
		&= \E \left[ \left( - \frac{k - 1}{X} + \frac{1}{\lambda} \right) \int_0^{\infty} f^{\prime}(s) \, \mathds{1} \{ X > s \} \mathrm{d}s \right] \\
		&= \E \left[ - \left( \frac{k - 1}{X} - \frac{1}{\lambda} \right) f(X) \right] \\
		&= \E \big[ f^{\prime}(X) \big]
	\end{align*}
	which implies the claim (note that the set of derivatives of the functions in $\mathcal{F}$ separates probability measures). For the converse, we assume $T^X (t) = F(t)$ for all $t \in \R$, and let $d^X$ be as above. By Tonelli's theorem,
	\begin{align*}
		\int_{\R} \left| d^X (s) \right| \mathrm{d}s
		\leq |k - 1| + \frac{1}{\lambda} \, \E |X|
		< \infty
	\end{align*}
	and thus $\left| d^X \right| < \infty$ $\mathcal{L}^1$-almost everywhere, with $\mathcal{L}^1$ denoting the Lebesgue measure on the Borel sets of the real line. Consequently, Fubini's theorem is applicable and we acknowledge (\ref{proof eq 2}) to hold true. Since $T^X$ is increasing, we have $d^X \geq 0$ $\mathcal{L}^1$-a.e. and as dominated convergence gives us
	\begin{align*}
		\E \left[ \left( - \frac{k - 1}{X} + \frac{1}{\lambda} \right) X \right]
		= \lim\limits_{t \, \to \, \infty} T^X (t)
		= 1,
	\end{align*}
	equation (\ref{proof eq 1}) holds. Therefore, $d^X$ is the density function corresponding to $T^X \equiv F$. Using this, we get
	\begin{align*}
		\E \big[ f^{\prime}(X) \big]
		& = \int_{\R} f^{\prime}(s) \, d^X(s) \, \mathrm{d}s \\
		& = \E \left[ \left( - \frac{k - 1}{X} + \frac{1}{\lambda} \right) \int_0^{\infty} f^{\prime}(s) \, \mathds{1} \{X > s\} \, \mathrm{d}s \right] \\
		& = \E \left[ - \left( \frac{k - 1}{X} - \frac{1}{\lambda} \right) f(X) \right]
	\end{align*}
	for every $f \in \mathcal{F}$. By Theorem \ref{thm SC} we are done.
\end{proof}

\begin{remark}
Note that \cite{BE:2018} used the similar zero bias transformation (introduced by \cite{GR:1997}) for testing normality. By analogy with this transformation, the proof of Theorem \ref{thm fixed point statement} also shows that if $X$ is chosen such that $T^X$ is a distribution function, there exists a random variable $X^{\Gamma} \sim T^X$ satisfying
\begin{align} \label{transform equation}
	\E \big[ f^{\prime}(X^{\Gamma}) \big] = \E \left[ \left( - \frac{k - 1}{X} + \frac{1}{\lambda} \right) f(X) \right]
\end{align}
for each $f \in \mathcal{F}$. Since $T^X$ apparently constitutes the only solution of (\ref{transform equation}), Theorem \ref{thm fixed point statement} assures that the $\Gamma(k, \lambda)$-law is the unique fixed point of the distributional transformation $\mathbb{P}^{X} \mapsto \mathbb{P}^{X^{\Gamma}}$. However, the restrictions on $X$ that render $T^X$ a distribution function are very strict (for instance, $\E X = k \lambda$ is a necessary condition) so, in general, we cannot think of $\mathbb{P}^X \mapsto T^X$ as a distributional transformation. For treatments that focus specifically on distributional transformations related to Stein's method we refer to \cite{GR:2005} and \cite{D:2017}. In the latter, the sign changes of the so called biasing function (which in our case is $B = - p^{\prime} / p$) are taken into account to guarantee that the 'biased distribution' is, indeed, a distribution. The price which is paid for this additional piece of structure is that the explicit representation of the transformation (see Remark 1 (d) in \cite{D:2017}) is substantially more complex, especially when considering that, in the case of the Gamma distribution, the sign change depends on the parameters which, later on, have to be estimated.
\end{remark}

\vspace{4mm}

By Theorem \ref{thm fixed point statement}, we are able to compare $T^X$ to $F$ in order to determine how close a given distribution is to $\Gamma(k, \lambda)$ and construct a goodness-of-fit statistic for testing the hypothesis $H_0$, based on $X_1, \dots, X_n$. In what follows, in addition to $X>0$ $\mathbb{P}$-almost surely, we assume $\E |X| < \infty$. Of course the unknown parameters $k, \lambda$ have to be estimated from the data. In view of the scale invariance of the class of Gamma distributions, we set $Y_{n,j} = X_j / \widehat{\lambda}_n$, $j = 1, \dots, n$, where $\widehat{\lambda}_n = \widehat{\lambda}_n(X_1, \dots, X_n)$ is a consistent, scale equivariant estimator of $\lambda$, i.e., we have
\begin{align*}
	\widehat{\lambda}_n (\beta X_1, \beta X_2, \dots, \beta X_n) = \beta \, \widehat{\lambda}_n (X_1, \dots, X_n), \quad \beta > 0.
\end{align*}
In addition, we let $\widehat{k}_n = \widehat{k}_n (X_1, \dots, X_n)$ be a consistent, scale invariant estimator of $k$ with
\begin{align*}
	\widehat{k}_n (\beta X_1, \beta X_2, \dots, \beta X_n) = \widehat{k}_n (X_1, \dots, X_n), \quad \beta > 0.
\end{align*}
Moreover, we assume that the limit $(\widehat{k}_n, \widehat{\lambda}_n) \stackrel{\mathbb{P}}{\longrightarrow} (k, \lambda) \in (0,\infty)^2$ exists regardless of the underlying distribution of $X$. Here, $\stackrel{\mathbb{P}}{\longrightarrow}$ denotes convergence in probability. Notice that any statistic based on $Y_{n,1}, \dots, Y_{n,n}$ and $\widehat{k}_n$ is scale invariant. Observing that $\widehat{\lambda}_n(Y_{n,1}, \dots, Y_{n,n}) = 1$ and that, under $H_0$, $Y_{n,1}$ approximately follows a $\Gamma(k,1)$-law for some $k \in (0,\infty)$, we propose the Cram\'{e}r-von Mises type test statistic
\begin{align} \label{teststatistic}
	G_{n} = \int_{0}^{\infty} \Lambda^2_n(t) \, w(t) \, \mathrm{d}t,
\end{align}
where
\begin{align*}
	\Lambda_n (t) = \sqrt{n} \left[ \frac{1}{n} \sum\limits_{j = 1}^{n} \left( - \frac{\widehat{k}_n - 1}{Y_{n,j}} + 1 \right) \min \{ Y_{n,j}, t \} - \frac{1}{n} \sum\limits_{j = 1}^{n} \mathds{1} \{ Y_{n,j} \leq t \} \right],
\end{align*}
for $t > 0$, and $w : (0, \infty) \to (0, \infty)$ is a continuous weight function satisfying
\begin{equation} \label{prerequ weight fct}
	\int_0^{\infty} (t^2 + 1) \, w(t) \, \mathrm{d}t < \infty .
\end{equation}
A test based on $G_n$ rejects $H_0$ for large values of the statistic. \\

Special focus will be given to the weight function $w_a(t) = e^{- at}$, where $a > 0$ is some tuning parameter. The appealing feature of this weight function is that $G_{n, a}=\int_{0}^{\infty} \Lambda^2_n(t) \, w_a(t) \, \mathrm{d}t,$ has a closed form expression. Namely, putting $B_{\widehat{k}_n}\left( Y_{j:n} \right) = - \frac{\widehat{k}_n - 1}{Y_{j:n}} + 1$ , where $Y_{1:n} \leq \dotsc \leq Y_{n:n}$ are the order statistics of $Y_{n,1},\ldots,Y_{n,n}$, we obtain
\begin{align} \label{computational form}
	G_{n, a}
	= \, & \frac{2}{n} \sum\limits_{1\leq j < \ell \leq n} \left\{ \frac{e^{- a Y_{\ell:n}}}{a} \big( Y_{j:n} - \widehat{k}_n \big) \left[ - \frac{1}{a} \, B_{\widehat{k}_n}\left( Y_{\ell:n} \right) - 1 \right]\right. \nonumber \\
	& ~~~~~~+ \frac{2}{a^3} \, B_{\widehat{k}_n}\left( Y_{j:n} \right) \, B_{\widehat{k}_n}\left( Y_{\ell:n} \right) \nonumber \\
	& ~~~~~~+ \left. \frac{e^{- a Y_{j:n}}}{a} B_{\widehat{k}_n}\left( Y_{\ell:n} \right) \left[ \frac{1}{a} \, (\widehat{k}_n - 2 - Y_{j:n}) - \frac{2}{a^2} \, B_{\widehat{k}_n}\left( Y_{j:n} \right) - Y_{j:n} \right] \right\} \nonumber \\
	& +  \frac{1}{n} \sum\limits_{j=1}^{n} \left\{ \frac{e^{- a Y_{j:n}}}{a} \left[ 2 \widehat{k}_n - 1 - 2 Y_{j:n} + \big(B_{\widehat{k}_n}\left( Y_{j:n} \right)\big)^2 \, \left( - \frac{2}{a} \, Y_{j:n} - \frac{2}{a^2} \right) \right] \right. \nonumber \\
	& ~~~~~~+ \left. \frac{2}{a^3} \, \big(B_{\widehat{k}_n}\left( Y_{j:n} \right)\big)^2 \right\}
\end{align}
by plain integral calculations.

It is interesting to see that, when restricted to the class of exponential distributions
\begin{align*}
	\mathbf{\mathcal{E}} = \big\{ \mathrm{Exp}(\lambda) = \Gamma(1, \lambda) \, \big| \, \lambda > 0 \big\} ,
\end{align*}
our statistic $G_{n,a}$ reduces to the one proposed in \cite{BH:2000} (see also \cite{BH:2008}). Arguing that $X$ has an exponential distribution if, and only if, the mean residual life function is constant, that is
\begin{equation*}
	\E \big[ X - t \, \big| \, X > t \big]
	= \E X, \quad \text{for each } t > 0,
\end{equation*}
which is equivalent to
\begin{equation} \label{mean res life}
	\E \big[ \min \{ X, t \} \big]
	= F(t) \, \E X, \quad \text{for each } t > 0,
\end{equation}
the authors of \cite{BH:2000} proposed the statistic
\begin{equation*}
	n \int_0^{\infty} \left( \frac{1}{n} \sum\limits_{k=1}^{n} \min \{ U_k, t \} - \frac{1}{n} \sum\limits_{k=1}^{n} \mathds{1}\{ U_k \leq t \} \right)^2 e^{-a t} \, \mathrm{d}t, \quad a > 0,
\end{equation*}
with $U_k = X_k / \overline{X}_n$, $k = 1, \dots, n$, and $\overline{X}_n = n^{-1} \sum_{k=1}^{n} X_k$. The resulting test is consistent against general alternatives (cf. \cite{BH:2000}) and has already been included in the extensive simulation study \cite{ASSV:2017} delivering a remarkable performance in terms of the tests power. We also want to emphasize that our Theorem \ref{thm fixed point statement} is a vast generalization of the characterization (\ref{mean res life}). \\

The paper is organized as follows. In the next section, we will apply the central limit theorem in Hilbert spaces to determine the limit distribution of $G_n$ under the hypothesis $H_0$. We will further use these results in Section \ref{contiguous alternatives} to study the behaviour of our statistic under a sequence of contiguous alternatives that converge to a fixed Gamma distribution at rate $n^{-1/2}$. In Section \ref{consistency}, we establish the consistency of our test based on $G_n$ against general alternatives. Section \ref{empirical} is devoted to the details of the implementation with special focus on the bootstrap technique and the maximum likelihood estimators. Additionally, we compare our new procedure to different classical and modern tests of fit in a finite-sample Monte Carlo power study.

\section{The limit null distribution}
\label{limit null}

As a framework for asymptotic results, we let $L^2 = L^2 \big( (0, \infty), \mathcal{B}_{>0}, w(t) \, \mathrm{d}t \big)$ be the Hilbert space of measurable, square integrable functions $f: (0,\infty) \to \R$ and regard $\Lambda_n$ as a random element of $L^2$. In this setting, $G_n = \norm{\Lambda_n}_{L^2}^2$, where
\begin{equation*}
	\norm{f}_{L^2}^2 = \int_0^{\infty} f^2(t) \, w(t) \, \mathrm{d}t
\end{equation*}
is the $L^2$-norm. We will use the notation $\stackrel{\mathcal{D}}{\longrightarrow}$ for the convergence in distribution of both random variables and random elements of $L^2$. For the remainder of this section, we assume that $H_0$ holds, i.e. $X \sim \Gamma(k,\lambda)$ for some $(k, \lambda) \in (0, \infty)^2$. By the scale invariance of $G_n$, the null distribution of $G_n$ does not depend on the underlying second parameter $\lambda$, so we assume $\lambda = 1$. In view of the bootstrap procedure used to obtain critical values for $G_n$, we consider a triangular array $X_{n,1}, \dots, X_{n,n}$, $n \in \N$, of (rowwise) independent and identically distributed random variables with
\begin{align*}
	X_{n,1} \sim \Gamma(k_n, 1), \quad k_n \in (0, \infty) \quad \text{and} \quad \lim\limits_{n \, \to \, \infty} k_n = k .
\end{align*}
We let $\widehat{k}_n$ be a scale invariant, consistent estimator of $k$, having an expansion
\begin{align} \label{expansion k}
	\sqrt{n} \big( \widehat{k}_n - k_n \big)
	= \frac{1}{\sqrt{n}} \sum\limits_{j = 1}^{n} \Psi_1(X_{n,j}, k_n) + o_{\mathbb{P}}(1),
\end{align}
where $o_{\mathbb{P}}(1)$ denotes convergence to $0$ in probability, as $n \to \infty$. The function $\Psi_1$ is measurable and satisfies
\begin{align} \label{regularity1 k}
	\E [ \Psi_1 (X_{n,1}, k_n) ] = 0, ~~~ \E [ \Psi_1^2(X_{n,1}, k_n) ] < \infty
\end{align}
and
\begin{align} \label{regularity2 k}
	\lim\limits_{n \, \to \, \infty} \E [ \Psi_1^2(X_{n,1}, k_n) ]
	= \E [ \Psi_1^2(X, k) ] .
\end{align}
Note that when implementing the bootstrap, we set $k_n = \widehat{k}_n(\omega)$ for the same element $\omega$ of the underlying probability space that generates the values \linebreak$X_{n,1}(\omega), \ldots, X_{n,n}(\omega)$. Similarly, we assume that $\widehat{\lambda}_n$ is a scale equivariant, consistent estimator of $\lambda$ with
\begin{align} \label{expansion lambda}
	\sqrt{n} \big( \widehat{\lambda}_n^{-1} - 1 \big)
	= \frac{1}{\sqrt{n}} \sum\limits_{j = 1}^{n} \Psi_2(X_{n,j}, k_n) + o_{\mathbb{P}}(1)
\end{align}
and require (\ref{regularity1 k}) and (\ref{regularity2 k}) to hold for $\Psi_2$ also. Under these assumptions, we obtain the following asymptotic result.
\begin{theorem} \label{thm H_0 distr}
	For the test statistic defined in (\ref{teststatistic}) we have
	\begin{align*}
		G_n = \norm{\Lambda_n}_{L^2}^2 \stackrel{\mathcal{D}}{\longrightarrow} \norm{\mathcal{W}}_{L^2}^2 \quad \text{as } n \to \infty.
	\end{align*}
	Here, $\mathcal{W}$ is a centred Gaussian element of $L^2$ with covariance kernel
	\begin{align*}
		\mathcal{K}_k (s,t) = & ~ k(1 + k) P(s, k+2) - 2 k P(s, k+1) + k^2 P(s, k) \\
		& + st \, \E \big[ \big( (k - 1)^2 X^{-2} - (k - 1) X^{-1} \big) \, \mathds{1} \{ X > t \} \big] \\
		& + s k \big( P(t, k+1) - P(s, k+1) \big) + s(1 - k) \big( P(t,k) - P(s,k) \big) \\
		& + s k \big( p(t,k) - p(s,k) \big) +  st \, p(t,k) \\
		& + \E \big[ R_k^X (s) \, r_k^X (t) \big] + \E \big[ R_k^X (t) \, r_k^X (s) \big] + \E \big[ r_k^X (s) \, r_k^X (t) \big],
	\end{align*}
	where $0 < s \leq t < \infty$,
	\begin{align*}
		R_k^X (t) &= (X - k) \, \mathds{1}\{ X \leq t \} + t \, \big( -(k - 1) X^{-1} + 1 \big) \, \mathds{1}\{ X > t \}, \\
		r_k^X (t) &= \Psi_1(X, k) \Big( P(t,k) + t \, \E\big[ X^{-1} \, \mathds{1}\{ X > t \} \big] \Big) \\
		& ~~~- \Psi_2(X, k) \Big( k P(t, k+1) + t \big( 1 - P(t,k) \big) \Big).
	\end{align*}
	Moreover,  $p(\cdot, k) = p(\cdot, k, 1)$ and $P(\cdot, k)$ denote the density and distribution function of the $\Gamma(k,1)$-law, respectively.
\end{theorem}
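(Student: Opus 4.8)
The plan is to prove the stronger statement that $\Lambda_n$, viewed as a random element of the Hilbert space $L^2$, converges in distribution to a centred Gaussian element $\mathcal{W}$ with covariance kernel $\mathcal{K}_k$, and then to read off the claim from the continuous mapping theorem: since $f \mapsto \norm{f}_{L^2}^2$ is continuous on $L^2$, the convergence $\Lambda_n \stackrel{\mathcal{D}}{\longrightarrow} \mathcal{W}$ forces $G_n = \norm{\Lambda_n}_{L^2}^2 \stackrel{\mathcal{D}}{\longrightarrow} \norm{\mathcal{W}}_{L^2}^2$. Everything therefore rests on a functional central limit theorem for $\Lambda_n$ in $L^2$.

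The core of the argument is an asymptotic linearisation of $\Lambda_n$ into a single row-wise i.i.d.\ sum. Writing $Y_{n,j} = \widehat{\lambda}_n^{-1} X_{n,j}$, I would regard $\Lambda_n(t)$ as a smooth function of the pair $(\widehat{k}_n, \widehat{\lambda}_n^{-1})$ and perform a first-order Taylor expansion about the deterministic point $(k_n, 1)$. Evaluating at that point leaves the base process $\frac{1}{\sqrt{n}} \sum_{j=1}^n R_{k_n}^{X_{n,j}}$; a short case distinction ($X \leq t$ versus $X > t$) shows that its summand is exactly $R_{k_n}^{X_{n,j}}(t)$, and it is centred because Theorem \ref{thm fixed point statement} gives $\E[(-(k_n-1)X^{-1}+1)\min\{X,t\}] = P(t,k_n)$ for $X \sim \Gamma(k_n,1)$. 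The derivative in the shape parameter produces the average $\frac{1}{n}\sum_j \min\{X_{n,j},t\}/X_{n,j}$, whose law of large numbers limit is $P(t,k) + t\,\E[X^{-1}\mathds{1}\{X>t\}]$, multiplied by $\sqrt{n}(\widehat{k}_n - k_n)$; inserting the expansion (\ref{expansion k}) turns this into the $\Psi_1$-part of $r_k^X$. The derivative in $\widehat{\lambda}_n^{-1}$ acts through the scaling $Y_{n,j} = \widehat{\lambda}_n^{-1} X_{n,j}$ on both the weight and the arguments $\min\{Y_{n,j},t\}$ and $\mathds{1}\{Y_{n,j}\leq t\}$; after taking expectations the non-smooth indicator is replaced by the smooth $P(\cdot,k)$, the scale derivative of the population mean yields $k P(t,k+1) + t(1 - P(t,k))$ (here one uses the identity $\E[(-(k-1)X^{-1}+1)\mathds{1}\{X>t\}] = p(t,k)$ from the necessity part of Theorem \ref{thm fixed point statement}), and (\ref{expansion lambda}) identifies the result with the $\Psi_2$-part of $r_k^X$. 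Collecting the three pieces gives the representation
\[
	\Lambda_n(t) = \frac{1}{\sqrt{n}} \sum_{j=1}^n Z_{n,j}(t) + r_n(t), \qquad Z_{n,j} = R_{k_n}^{X_{n,j}} + r_{k_n}^{X_{n,j}},
\]
with $\norm{r_n}_{L^2} = o_{\mathbb{P}}(1)$ (the precise signs and constants must be tracked to match $\mathcal{K}_k$).

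I expect the control of the remainder $r_n$ in the $L^2$-norm to be the main obstacle. The delicate term is the empirical distribution function $n^{-1}\sum_j \mathds{1}\{X_{n,j} \leq t\widehat{\lambda}_n\}$, evaluated at the data-dependent location $t\widehat{\lambda}_n$: one has to show that, once the first-order drift has been subtracted, the deviation from the process evaluated at $t$ is negligible uniformly enough to vanish in $L^2$. I would settle this by a stochastic equicontinuity argument for the monotone, uniformly bounded indicator class, ultimately reducing the claim to the second-moment bound $\E\norm{r_n}_{L^2}^2 \to 0$. The weight condition (\ref{prerequ weight fct}) guarantees that all integrals over $t$ converge: the truncations $\min\{X,t\}$ and $\mathds{1}\{X>t\}$ keep every occurrence of $X^{-1}$ bounded, so $\E[R_k^X(t)^2]$ grows at most like $t^2$, which is integrable against $w$.

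It then remains to apply a Lindeberg-type central limit theorem for row-wise i.i.d.\ triangular arrays of random elements in the separable Hilbert space $L^2$. The covariance operator of $\frac{1}{\sqrt{n}}\sum_j Z_{n,j}$ coincides with that of $Z_{n,1}$, whose kernel $\E[Z_{n,1}(s)Z_{n,1}(t)]$ converges, by continuity in the shape parameter together with the moment hypotheses (\ref{regularity1 k})--(\ref{regularity2 k}) on $\Psi_1, \Psi_2$, to $\E[(R_k^X(s)+r_k^X(s))(R_k^X(t)+r_k^X(t))]$; expanding this product reproduces $\mathcal{K}_k$, with the contribution $\E[R_k^X(s)R_k^X(t)]$ evaluated through the Gamma moment identities $\E[X^m\mathds{1}\{X\leq s\}] = k(k+1)\cdots(k+m-1)P(s,k+m)$ and the recurrence $P(\cdot,k-1) = P(\cdot,k) + p(\cdot,k)$, which is precisely why the densities $p(t,k)$ and $p(s,k)$ surface in the kernel. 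The Lindeberg condition follows from the uniform integrability of $\norm{Z_{n,1}}_{L^2}^2$, again supplied by (\ref{regularity2 k}) and the convergence of moments; this yields the Gaussian limit $\mathcal{W}$ and, through the continuous mapping theorem, completes the proof.
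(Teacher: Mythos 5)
Your proposal is correct in its overall architecture and it does arrive at the paper's linearisation (the summand $R_{k_n}^{X_{n,j}} + r_{k_n}^{X_{n,j}}$, the Hilbert-space CLT for the triangular array, continuous mapping), but it takes a genuinely different route at the one step where all the technical difficulty sits: the scale estimator. You keep the statistic in the scaled variable and therefore face the empirical process at data-dependent locations, $n^{-1}\sum_j \mathds{1}\{X_{n,j} \leq \widehat{\lambda}_n t\}$, which you propose to handle by a stochastic equicontinuity (Donsker-type) argument for a triangular array, extracting the drift $t\,p(t,k)\,\sqrt{n}(\widehat{\lambda}_n - 1)$ from the non-smooth indicator. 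The paper avoids this entirely: it substitutes $s = \widehat{\lambda}_n t$ inside the integral (equation (\ref{representation G})), after which the indicator has the \emph{non-random} argument $\mathds{1}\{X_{n,j} \leq s\}$ and the whole $\widehat{\lambda}_n$-dependence sits in the linear coefficient $\widehat{\lambda}_n^{-1}\min\{X_{n,j},s\}$ — so the linearisation needs nothing beyond $L^2$-laws of large numbers and tightness of $\sqrt{n}(\widehat{\lambda}_n^{-1}-1)$, and the coefficient of $\Psi_2$ is read off immediately as $\E[\min\{X,s\}] = kP(s,k+1)+s(1-P(s,k))$. Your route instead has to verify that the three separate scale-derivative contributions (from the factor $-(\widehat{k}_n-1)/Y_{n,j}$, from $\min\{Y_{n,j},t\}$, and from the indicator) aggregate to that same function — your sketch gets this consistency check right, via the identity $\E[(-(k-1)X^{-1}+1)\mathds{1}\{X>t\}] = p(t,k)$. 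What each approach buys: the paper's change of variables is technically much lighter (no empirical process theory at all), but its price is the perturbed weight $w(\widehat{\lambda}_n^{-1}s)$, which must be controlled as in (\ref{evade parameter in weight fct}); your version keeps the weight fixed at $w(t)$ and so never needs that step, at the cost of the heavier (though standard) equicontinuity machinery — a trade-off that could be preferable for weight functions that do not behave well under rescaling. Do note, if you carry this out in full, that the sign bookkeeping you defer (the $\Psi_1$-term enters with the sign of $\sqrt{n}(k_n-\widehat{k}_n)$, the $\Psi_2$-term with that of $\sqrt{n}(\widehat{\lambda}_n^{-1}-1)$) genuinely matters for the cross terms $\E[R_k^X(s)\,r_k^X(t)]$ in $\mathcal{K}_k$.
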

\begin{proof}
	We define
	\begin{align} \label{V_n element L^2 for change of var}
		V_n (s) = \frac{1}{\sqrt{n}} \sum\limits_{j = 1}^{n} \big( - (\widehat{k}_n - 1) X_{n,j}^{-1} + \widehat{\lambda}_n^{-1} \big) \min \{ X_{n,j}, s \} - \mathds{1} \{ X_{n,j} \leq s \},
	\end{align}
	and observe that a simple change of variable yields
	\begin{align} \label{representation G}
		G_n
		= \widehat{\lambda}_n^{-1} \int_0^{\infty} V^2_n (s) \, w\big(\widehat{\lambda}_n^{-1} s \big) \, \mathrm{d}s
		= \norm{V_n}_{L^2}^2 + o_{\mathbb{P}}(1) .
	\end{align}
	Here, the second equality holds since $\widehat{\lambda}_n \to 1$ in probability, as $n \to \infty$, and
	\begin{align} \label{evade parameter in weight fct}
		\left| \int_0^{\infty} V^2_n (s) \, w(s) \, \mathrm{d}s - \int_0^{\infty} V^2_n (s) \, w\big(\widehat{\lambda}_n^{-1} s \big) \, \mathrm{d}s \right|
		&\leq \sup\limits_{s \, > \, 0} \left| 1 - \frac{w\big( \widehat{\lambda}_n^{-1} s \big)}{w(s)} \right| \norm{V_n}_{L^2}^2 \nonumber \\
		&\stackrel{\mathbb{P}}{\longrightarrow} 0,
	\end{align}
	provided that $V_n \in L^2$ is bounded in probability. Indeed, we will show that $V_n$ converges in distribution to the Gaussian element stated in the theorem. To this end, we notice that
	\begin{align*}
		\widetilde{V}_n (s)
		= & ~ \frac{1}{\sqrt{n}} \sum\limits_{j=1}^{n} \big( -(k_n - 1) X_{n,j}^{-1} + 1\big) \min \{ X_{n,j}, s \} - \mathds{1} \{ X_{n,j} \leq s \} \\
		& ~ + \sqrt{n} (k_n - \widehat{k}_n) \, \E \big[ X^{-1} \min \{ X, s \} \big]
		+ \sqrt{n} (\widehat{\lambda}_n^{-1} - 1) \, \E \big[ \min \{ X, s \} \big]
	\end{align*}
	satisfies $\lVert V_n - \widetilde{V}_n \rVert_{L^2}^2 = o_{\mathbb{P}}(1),$ since both $\sqrt{n} (k_n - \widehat{k}_n)$ and $\sqrt{n} (\widehat{\lambda}_n^{-1} - 1)$ are bounded in probability and, using (\ref{prerequ weight fct}) together with Fubini's theorem and $X_{n,1} \stackrel{\mathcal{D}}{\longrightarrow} X$, we have
	\begin{equation*}
		\norm{ \frac{1}{n} \sum_{j=1}^{n} X_{n,j}^{-1} \min \{ X_{n,j}, \cdot \} - \E \big[ X^{-1} \min \{ X, \cdot \} \big] }_{L^2}^2 = o_{\mathbb{P}}(1)
	\end{equation*}
	and
	\begin{equation*}
		\norm{\frac{1}{n} \sum_{j=1}^{n} \min \{ X_{n,j}, \cdot \} - \E \big[ \min \{ X, \cdot \} \big]}_{L^2}^2
		= o_{\mathbb{P}}(1) .
	\end{equation*}
	From (\ref{expansion k}) and (\ref{expansion lambda}), we obtain
	\begin{align*}
		\left\lVert \widetilde{V}_n - \frac{1}{\sqrt{n}} \sum\limits_{j=1}^{n} W_{n,j} \right\rVert_{L^2}^2 = o_{\mathbb{P}}(1),
	\end{align*}
	where
	\begin{align*}
		W_{n,j} (s)
		&= \left( - (k_n - 1) X_{n,j}^{-1} + 1 \right) \min \{ X_{n,j}, s \} - \mathds{1} \{ X_{n,j} \leq s \} \\
		& ~~~~+ \Psi_1(X_{n,j}, k_n) \, \E \big[ X^{-1} \min \{ X, s \} \big] + \Psi_2(X_{n,j}, k_n) \, \E \big[ \min \{ X, s \} \big],
	\end{align*}
	$j = 1, \dots, n$, are i.i.d. random elements of $L^2$. Since
	\begin{equation*}
		\E \big[ \big( -(k_n - 1) X_{n,1}^{-1} + 1\big) \min \{ X_{n,1}, s \} \big] = \mathbb{P} \big( X_{n,1} \leq s \big)
	\end{equation*}
	by Theorem \ref{thm fixed point statement} and $\E [\Psi_1(X_{n,1}, k_n)] = \E [\Psi_2(X_{n,1}, k_n)] = 0$ by assumption, we have $\E W_{n,1} = 0$. Consequently, by the central limit theorem for Hilbert spaces as stated, for instance, in \cite{CW:1998}, there exists a centred Gaussian random element $\mathcal{W}$ of $L^2$ with $V_n \stackrel{\mathcal{D}}{\longrightarrow} \mathcal{W}$. Since the assumptions on $\Psi_1$ and $\Psi_2$ entail
	\begin{equation*}
		\mathcal{K}_k (s,t) = \lim_{n \, \to \, \infty} \E [W_{n,1}(s) \, W_{n,1}(t)], \quad  s \leq t,
	\end{equation*}
	(\ref{representation G}) and (\ref{evade parameter in weight fct}), combined with Slutsky's Lemma and the continuous mapping theorem, imply the claim.
\end{proof}

\begin{remark}
	The distribution of $\norm{\mathcal{W}}_{L^2}^2$ is known to be that of $\sum_{j=1}^{\infty} \kappa_j N_j^2$, where $N_1, N_2, \dotso$ are independent, standard Gaussian random variables and $\kappa_1, \kappa_2, \dotso$ are the non-zero eigenvalues of the integral operator
	\begin{equation*}
	L^2 \longrightarrow L^2, \quad f \longmapsto \int_0^{\infty} \mathcal{K}_k (\cdot, t) \, f(t) \, w(t) \, \mathrm{d}t .
	\end{equation*}
	We can neither hope to calculate the eigenvalues of this operator explicitly nor do we know the limiting parameter $k$ in practice. Thus, we cannot use asymptotic critical values to implement our test but are forced to operate a parametric bootstrap procedure. With a proof similar to that presented in \cite{Hen:1996}, Theorem \ref{thm H_0 distr} guarantees that in this endeavour, a given level of significance is attained in the limit, as the sample size and the number of bootstrap replications go to infinity.
\end{remark}

\begin{remark}	
	To justify the conditions on the estimators, we will specify them for the maximum likelihood and moments estimators. Since, in the former case, the estimators solve the log-likelihood equations for $k$ and $\lambda$, we have
	\begin{equation} \label{likelihood equation k}
		\log(\widehat{k}_n) - \psi(\widehat{k}_n) = \log(\overline{X}_n) - \frac{1}{n} \sum\limits_{j=1}^{n} \log(X_{n,j}),
	\end{equation}
	with the Digamma function $\psi(k_n) = \Gamma^{\prime}(k_n) / \Gamma(k_n)$, and
	\begin{equation} \label{likelihood equation lambda}
		\widehat{\lambda}_n = \frac{\overline{X}_n}{\widehat{k}_n},
	\end{equation}
	where $\overline{X}_n = n^{-1} \sum_{j = 1}^{n} X_{n,j}$ denotes the sample mean. Noticing that (\ref{likelihood equation k}) may be written as
	\begin{equation*}
		\log(\widehat{k}_n) - \psi(\widehat{k}_n) - \log (k_n) + \psi(k_n)
		= \log(\overline{X}_n) - \log(k_n) + \psi(k_n) - \frac{1}{n} \sum\limits_{j=1}^{n} \log(X_{n,j}) ,
	\end{equation*}
	Taylor expansions of the left-hand side and the first two logarithmic terms on the right-hand side yield
	\begin{align*}
		\left( \frac{1}{k_n} - \psi^{\prime}(k_n) \right) \sqrt{n} \big( \widehat{k}_n - k_n \big)
		&= \sqrt{n} \left( \psi(k_n) - \frac{1}{n} \sum\limits_{j=1}^{n} \log(X_{n,j}) \right) \\
		& ~~~+ \frac{1}{k_n} \sqrt{n} \big( \overline{X}_n - k_n \big) + o_{\mathbb{P}}(1) .
	\end{align*}
	Thus, (\ref{expansion k}) holds with
	\begin{equation*}
		\Psi_1(X_{n,j}, k_n) = \big( 1 - k \psi^{\prime}(k) \big)^{-1} \Big\{ X_{n,j} - k_n + k \big( \psi(k_n) - \log(X_{n,j}) \big) \Big\} .
	\end{equation*}
	Next, we rewrite (\ref{likelihood equation k}) as
	\begin{equation*}
		\log \big(\widehat{k}_n\big) - \psi\big(\widehat{k}_n\big) - \log \big( \overline{X}_n \big) + \psi\big( \overline{X}_n \big)
		= \psi\big( \overline{X}_n \big) - \frac{1}{n} \sum\limits_{j=1}^{n} \log \big(X_{n,j}\big) ,
	\end{equation*}
	and, with another pair of Taylor expansions, we arrive at
	\begin{align*}
		\left( \frac{1}{\overline{X}_n} - \psi^{\prime}\big( \overline{X}_n \big) \right) \sqrt{n} \big( \widehat{k}_n - \overline{X}_n \big)
		&= \sqrt{n} \left( \psi(k_n) - \frac{1}{n} \sum\limits_{j = 1}^{n} \log(X_{n,j}) \right) \\
		& ~~~+ \sqrt{n} \Big( \psi^{\prime}(k_n) \big( \overline{X}_n - k_n \big) \Big) + o_{\mathbb{P}}(1) .
	\end{align*}
	Finally, (\ref{likelihood equation lambda}) provides
	\begin{equation*}
		\sqrt{n} \big( \widehat{\lambda}_n^{-1} - 1 \big)
		= \overline{X}_n^{-1} \big( \widehat{k}_n - \overline{X}_n \big)
	\end{equation*}
	and, using that $\overline{X}_n \to k$ $\mathbb{P}$-a.s., (\ref{expansion lambda}) holds with
	\begin{equation*}
		\Psi_2 (X_{n,j}, k_n) = \big( 1 - k \psi^{\prime}(k) \big)^{-1} \Big\{ \psi^{\prime}(k) (X_{n,j} - k_n) + \psi(k_n) - \log(X_{n,j}) \Big\} .
	\end{equation*}
	Now, we can further specify the covariance operator figuring in Theorem \ref{thm H_0 distr}. For the maximum likelihood estimators we get
	\begin{align*}
		\E \big[ r_k^X(s) \, r_k^X(t) \big]
		&= \big( 1 - k \psi^{\prime}(k) \big)^{-1} \Big( -k e_1(s) e_1(t) - \psi^{\prime}(k) e_2(s)e_2(t) \\
		& ~~~~~~~~~~~~~~~~~\,~~~~~~~~+ e_1(s)e_2(t) + e_1(t)e_2(s) \Big) ,
	\end{align*}
	where we set
	\begin{equation*}
		e_1(t) = P(t,k) + t \, \E\big[ X^{-1} \, \mathds{1}\{ X > t \} \big] \quad \text{and} \quad
		e_2(t) = k P(t, k+1) + t \big( 1 - P(t,k) \big) .
	\end{equation*}
	Moreover, we have
	\begin{align*}
		&\big( 1 - k \psi^{\prime}(k) \big) \, \E \big[ R_k^X(s) \, r_k^X(t) \big] \\
		& ~~~~= \big( e_1(t) - \psi^{\prime}(k) e_2(t) \big) \Big\{ k(1 + k) P(s, k+2) - 2 k P(s, k+1) + k^2 P(s, k) \Big\} \\
		& ~~~~~~+ \big( k e_1(t) - e_2(t) \big) \Big\{ k \psi(k) \big( P(s, k+1) - P(s,k) \big) \\
		& ~~~~~~~~~~~~~~~~~~~~~~~~~~~~~~~~- \E\big[ \log(X) (X - k) \, \mathds{1}\{ X \leq s \} \big] \Big\} \\
		& ~~~~~~+ s \big( e_1(t) - \psi^{\prime}(k) e_2(t) \big) \Big\{ 1 - k P(s, k+1) + (k - 1) P(s,k) - k p(s,k) \Big\} \\
		& ~~~~~~+ s \big( k e_1(t) - e_2(t) \big) \Big\{ \psi(k) p(s,k) \\
		& ~~~~~~~~~~~~~~~~~~~~~~~~~~~~~~~~~- \E\big[ \big( - (k - 1)X^{-1} + 1 \big) \log(X) \, \mathds{1}\{ X > s \} \big] \Big\} .
	\end{align*}
	If moment estimation is employed, the estimators take the form
	\begin{equation*}
		\widehat{k}_n = \frac{(\overline{X}_n)^2}{S_n^2} \quad \text{and} \quad
		\widehat{\lambda}_n = \frac{S_n^2}{\overline{X}_n} ,
	\end{equation*}
	where $S_n^2 = n^{-1} \sum_{j=1}^{n} (X_{n,j} - \overline{X}_n)^2$ is the sample variance. We rewrite the first relation as
	\begin{equation} \label{starting equation moment estima}
		\log(\widehat{k}_n) = 2 \log(\overline{X}_n) - \log(S_n^2)
	\end{equation}
	and realize that Taylor expansions yield
	\begin{align*}
		\sqrt{n} \big(\log(\widehat{k}_n) - \log(k_n) \big)
		&= \sqrt{n} \Big( 2 \log(\overline{X}_n) - 2 \log(k_n) - \big( \log(S_n^2) - \log(k_n) \big) \Big) \\
		&= 2 k_n^{-1} \sqrt{n} (\overline{X}_n - k_n) - k_n^{-1} \sqrt{n} (S_n^2 - k_n) + o_{\mathbb{P}}(1) .
	\end{align*}
	Since
	\begin{equation*}
		\sqrt{n} \big( S_n^2 - k_n \big)
		= \frac{1}{\sqrt{n}} \sum\limits_{j = 1}^{n} \big[ (X_{n,j} - k_n)^2 - k_n \big] + o_{\mathbb{P}}(1) ,
	\end{equation*}
	we verify with yet another Taylor expansion that (\ref{expansion k}) holds with
	\begin{equation*}
		\Psi_1(X_{n,j}, k_n) = 2 X_{n,j} - k_n - (X_{n,j} - k_n)^2 .
	\end{equation*}
	In a similar manner, we derive
	\begin{equation*}
		\sqrt{n} \big( \log(\widehat{k}_n) - \log(\overline{X}_n) \big)
		= k_n^{-1} \, \sqrt{n} (\overline{X}_n - k_n) - k_n^{-1} \, \sqrt{n} (S_n^2 - k_n) + o_{\mathbb{P}}(1)
	\end{equation*}
	from (\ref{starting equation moment estima}) and, via
	\begin{equation*}
		\sqrt{n} \big( \widehat{\lambda}_n^{-1} - 1 \big)
		= \overline{X}_n^{-1} \sqrt{n}\big( \widehat{k}_n - \overline{X}_n \big),
	\end{equation*}
	establish that (\ref{expansion lambda}) is satisfied putting
	\begin{equation*}
		\Psi_2(X_{n,j}, k_n) = \frac{1}{k} \Big\{ X_{n,j} - (X_{n,j} - k_n)^2 \Big\} .
	\end{equation*}
	For the corresponding terms in the covariance operator from Theorem \ref{thm H_0 distr}, we calculate
	\begin{align*}
		\E \big[ r_k^X(s) \, r_k^X(t) \big]
		&= 2k(1 + k) \, e_1(s) e_2(t) - 2(1 + k) \big( e_1(s) e_2(t) + e_1(t) e_2(s) \big) \\
		& ~~~+ k^{-1} (3 + 2k) \, e_2(s) e_2(t)
	\end{align*}
	and
	\begin{align*}
		&\E \big[ R_k^X(s) \, r_k^X(t) \big] \\
		&~~= \left( 2 e_1(t) - k^{-1} e_2(t) \right) \big\{ k(1 + k) P(s, k+2) - k^2 P(s, k+1) \big\} \\
		& ~~~~+ \left( k^{-1} e_2(t) - e_1(t) \right) \big\{ k(k+1)(k+2) P(s, k+3) - 3k^2 (1 + k) P(s, k+2) \\
		& ~~~~~~~~~~~~~~~~~~~~~~~~~~~~~~~~~+ 3k^3 P(s, k + 1) - k^3 P(s,k) \big\} \\
		& ~~~~- k^2 e_1(t) \big( P(s, k+1) - P(s,k) \big) - k s \, e_1(t) p(s,k) \\
		& ~~~~+ \left( k^{-1} e_2(t) - e_1(t) \right) s \big\{ k(1 + k) (1 - P(s, k+2)) + 2k(k - 1) (1 - P(s,k)) \\
		& ~~~~~~~~~~~~~~~~~~~~~~~~~~~~~\,~~~~~+ k(1 - 3k) (1 - P(s, k+1)) + k^2 p(s,k) \big\} \\
		& ~~~~+ \left( 2 e_1(t) - k^{-1} e_2(t) \right) s \big\{ k (1 - P(s, k+1)) + (1 - k) (1 - P(s,k)) \big\} .
	\end{align*}
	Summarizing, we have validated the asymptotic expansions listed in Remark 2.2 of \cite{HME:2012} for both types of estimators.
\end{remark}

\section{The behaviour under contiguous alternatives}
\label{contiguous alternatives}

In contrast to the setting under the null hypothesis, where we had to study the asymptotic properties of our statistic under a triangular array to account for the bootstrap procedure that is run to obtain critical values, we will now look at a triangular array of (rowwise) i.i.d. random variables $Z_{n,1}, \dots, Z_{n,n}$, $n \in \N$, with Lebesgue density
\begin{equation*}
	q_n (x) = p_0(x) \left( 1 + \tfrac{c(x)}{\sqrt{n}} \right), \quad x \in (0, \infty).
\end{equation*}
Here, $p_0$ denotes the density function of a fixed $H_0$-distribution, that is, with the scale invariance of our statistic in mind, the density function of the $\Gamma(k, 1)$-law for some arbitrary, but fixed, $k > 0$. The function $c : (0, \infty) \to \R$ is measurable and bounded, and it satisfies
\begin{align*}
	\int_{0}^{\infty} c(x) \, p_0(x) \, \mathrm{d}x = 0 .
\end{align*}
Additionally, we assume $n$ to be large enough to ensure $q_n \geq 0$. In what follows, we examine
\begin{equation*}
	\mu_n = \bigotimes\limits_{j=1}^{n} (p_0 \, \mathcal{L}^1) \quad \text{and} \quad
	\nu_n = \bigotimes\limits_{j=1}^{n} (q_n \, \mathcal{L}^1),
\end{equation*}
which are measures on $(0, \infty)^n$ endowed with the Borel-$\sigma$-field $\mathcal{B}_{>0}^n$.
\begin{lemma}
	The sequence of measures $\{ \nu_n \}_{n \, \in \, \mathbb{N}}$ is contiguous to $\{ \mu_n \}_{n \, \in \, \mathbb{N}}$.
\end{lemma}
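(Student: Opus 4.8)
The plan is to invoke Le Cam's first lemma: to prove that $\{\nu_n\}$ is contiguous with respect to $\{\mu_n\}$ it suffices to show that the log-likelihood ratios, evaluated under the null sequence $\{\mu_n\}$, converge in distribution to a Gaussian law of the form $N(-\tfrac{1}{2}\tau^2,\tau^2)$ for some $\tau^2\geq0$. Indeed, such a limit $U$ is lognormal after exponentiation and satisfies $\E[e^U]=1$, which by Le Cam's first lemma (see, e.g., \cite{CW:1998} or standard references on Stein's method and asymptotic statistics) yields $\nu_n\triangleleft\mu_n$.

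First I would record the log-likelihood ratio. Since $q_n=p_0\,(1+c/\sqrt{n})$ and $c$ is bounded, for $n$ large the factor $1+c(x)/\sqrt{n}$ is strictly positive on $(0,\infty)$, so $\mu_n$ and $\nu_n$ are mutually absolutely continuous and, writing $\xi_1,\dots,\xi_n$ for the coordinate projections on $(0,\infty)^n$ which under $\mu_n$ are i.i.d. with the fixed density $p_0$,
\begin{align*}
	\Lambda_n := \log\frac{\mathrm{d}\nu_n}{\mathrm{d}\mu_n}(\xi_1,\dots,\xi_n) = \sum_{j=1}^{n}\log\left(1+\frac{c(\xi_j)}{\sqrt{n}}\right).
\end{align*}
Next I would Taylor-expand $\log(1+u)=u-\tfrac{1}{2}u^2+O(|u|^3)$ with $u=c(\xi_j)/\sqrt{n}$. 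Because $|u|\leq\norm{c}_\infty/\sqrt{n}$ uniformly, the summed cubic remainder is bounded by $n\cdot\norm{c}_\infty^3/n^{3/2}\to0$, hence
\begin{align*}
	\Lambda_n = \frac{1}{\sqrt{n}}\sum_{j=1}^{n}c(\xi_j) - \frac{1}{2n}\sum_{j=1}^{n}c^2(\xi_j) + o_{\mathbb{P}}(1)
\end{align*}
under $\mu_n$.

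I would then apply the classical limit theorems to the i.i.d. sequence $\big(c(\xi_j)\big)_j$. By hypothesis $\E[c(\xi_1)]=\int_0^\infty c(x)\,p_0(x)\,\mathrm{d}x=0$, and $\tau^2:=\int_0^\infty c^2(x)\,p_0(x)\,\mathrm{d}x<\infty$ since $c$ is bounded. The central limit theorem gives $n^{-1/2}\sum_{j=1}^{n}c(\xi_j)\stackrel{\mathcal{D}}{\longrightarrow}N(0,\tau^2)$, while the weak law of large numbers gives $n^{-1}\sum_{j=1}^{n}c^2(\xi_j)\stackrel{\mathbb{P}}{\longrightarrow}\tau^2$. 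Combining these through Slutsky's lemma yields $\Lambda_n\stackrel{\mathcal{D}}{\longrightarrow}N(-\tfrac{1}{2}\tau^2,\tau^2)$ under $\mu_n$, and the contiguity conclusion follows from Le Cam's first lemma as described above.

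The computations are routine and I expect no genuine obstacle; the single point deserving care is the control of the Taylor remainder, which is precisely where the boundedness of $c$ is essential, as it ensures both the uniform smallness of the increments $c(\xi_j)/\sqrt{n}$ and the finiteness of the limiting variance $\tau^2$. Everything else reduces to the standard i.i.d. central limit theorem and law of large numbers, together with the mutual absolute continuity of $\mu_n$ and $\nu_n$ for large $n$.
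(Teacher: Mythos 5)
Your proposal is correct and follows essentially the same route as the paper: Taylor expansion of the log-likelihood ratio using the boundedness of $c$, a central limit theorem plus law of large numbers (with Slutsky) to obtain the $\mathcal{N}(-\tau^2/2,\tau^2)$ limit under $\mu_n$, and Le Cam's first lemma to conclude contiguity. The only differences are cosmetic: you spell out the cubic remainder bound that the paper leaves implicit, and the paper cites the Lindeberg--Feller version of the CLT where the classical i.i.d.\ version you use suffices.
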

\begin{proof}
	By the absolute continuity of $\nu_n$ with respect to $\mu_n$, the Radon-Nikodym derivative $L_n = \frac{\mathrm{d}\nu_n}{\mathrm{d}\mu_n}$ exists and a Taylor expansion (using the boundedness of $c$) gives
	\begin{align*}
		\log \big( L_n (X_{n,1}, \dots, X_{n,n}) \big)
		= \sum\limits_{j=1}^{n} \left( \frac{c(X_{n,j})}{\sqrt{n}}  - \frac{c^2(X_{n,j})}{2n} \right) + o_{\mathbb{P}}(1),
	\end{align*}
	whenever $(X_{n,1}, \dots, X_{n,n}) \stackrel{\mathcal{D}}{=} \mu_n$. Writing
	\begin{equation*}
		\tau^2 = \int_0^{\infty} c^2(x) \, p_0(x) \, \mathrm{d}x,
	\end{equation*}
	the Lindeberg-Feller central limit theorem and Slutsky's Lemma imply
	\begin{equation*}
		\log \big( L_n \big) \xlongrightarrow{\mathcal{D}}
		\mathcal{N} \left( - \frac{\tau^2}{2}, \tau^2 \right) \quad\mbox{under}\,\mu_n.
	\end{equation*}
	By LeCam's first Lemma (cf. \cite{HSS:1999}, p.253, Corollary 1) we are done.
\end{proof}

\vspace{4mm}
We will now use the statements from Section \ref{limit null} and fundamental results for contiguous measures to derive the (non-degenerate) limit distribution of our statistic under the sequence of contiguous alternatives of the form given above. From this result we conclude that a test based on $G_n$ is able to detect these alternatives. To our knowledge, this setting has not yet been examined in the context of goodness-of-fit tests for Gamma distributions, however, the standard reasoning for this type of contiguous alternatives (see, for instance, \cite{HW:1997} or \cite{BE:2018}) works well.

\begin{theorem}
	Under the triangular array $\big(Z_{n,1}, \dots, Z_{n,n}\big) \stackrel{\mathcal{D}}{=} \nu_n$ we have
	\begin{equation*}
		G_n \stackrel{\mathcal{D}}{\longrightarrow}
		\lVert \mathcal{W} + \zeta \rVert_{L^2}^2 \quad \text{as } n \to \infty,
	\end{equation*}
	where $\zeta(\cdot) = \int_0^{\infty} \eta(x, \cdot) \, c(x) \, p_0(x) \, \mathrm{d}x \in L^2$,
	\begin{align*}
		\eta(x,s) = & \left( - (k - 1) x^{-1} + 1 \right) \min \{ x, s \} - \mathds{1} \{ x \leq s \} \\
		& + \Psi_1(x, k) \, \E \left[ X^{-1} \min \{ X, s \} \right] + \Psi_2(x, k) \, \E \big[ \min \{ X, s \} \big] ,
	\end{align*}
	with $X \sim \Gamma(k,1)$, and $\mathcal{W}$ is the centred Gaussian element of $L^2$ that arose in the proof of Theorem \ref{thm H_0 distr}.
\end{theorem}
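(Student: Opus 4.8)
The plan is to transfer the null asymptotics of Theorem \ref{thm H_0 distr} to the contiguous alternative by means of LeCam's Third Lemma, for which I first need the joint limit of the statistic and the log-likelihood ratio under $\mu_n$. To this end, I would recall from the proof of Theorem \ref{thm H_0 distr}, now specialised to the constant sequence $k_n \equiv k$ and to data distributed as $\Gamma(k,1)$, the linearisation
\[
V_n = \frac{1}{\sqrt{n}} \sum_{j=1}^{n} W_{n,j} + o_{\mathbb{P}}(1), \qquad W_{n,j}(s) = \eta(X_{n,j}, s),
\]
where $V_n$ is the element defined as in (\ref{V_n element L^2 for change of var}), the $W_{n,j}$ are i.i.d.\ centred random elements of $L^2$ and $V_n \stackrel{\mathcal{D}}{\longrightarrow} \mathcal{W}$ under $\mu_n$. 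I would also carry over the representation $G_n = \norm{V_n}_{L^2}^2 + o_{\mathbb{P}}(1)$ furnished by (\ref{representation G}) and (\ref{evade parameter in weight fct}), and from the proof of the preceding Lemma the expansion $\log L_n = \tfrac{1}{\sqrt{n}} \sum_{j=1}^{n} c(X_{n,j}) - \tfrac{\tau^2}{2} + o_{\mathbb{P}}(1)$ valid under $\mu_n$.

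Next I would establish the joint convergence of $(V_n, \log L_n)$ under $\mu_n$. Up to the deterministic shift $-\tau^2/2$ and an $o_{\mathbb{P}}(1)$ term, this pair is a normalised sum of the i.i.d., centred elements $\big(W_{n,j}, c(X_{n,j})\big)$ of the Hilbert space $L^2 \times \R$, so the central limit theorem for Hilbert spaces delivers $(V_n, \log L_n) \stackrel{\mathcal{D}}{\longrightarrow} (\mathcal{W}, N)$ with a jointly Gaussian limit, where $N$ is $\mathcal{N}(-\tau^2/2, \tau^2)$. Exploiting the independence across $j$, the cross-covariance of the two coordinate sums is $\E\big[ W_{n,1}(s)\, c(X_{n,1}) \big]$, which converges to
\[
\E\big[ \eta(X, s)\, c(X) \big] = \int_0^{\infty} \eta(x, s)\, c(x)\, p_0(x)\, \mathrm{d}x = \zeta(s), \qquad X \sim \Gamma(k,1);
\]
a short estimate using the boundedness of $c$ together with (\ref{prerequ weight fct}) confirms $\zeta \in L^2$.

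With the joint limit in hand, I would invoke LeCam's Third Lemma (along the lines of \cite{HW:1997} and \cite{BE:2018}): since $\{\nu_n\}$ is contiguous to $\{\mu_n\}$ by the preceding Lemma, passing to $\nu_n$ leaves the covariance of the limit of $V_n$ unchanged but shifts its mean by the cross-covariance $\zeta$, whence $V_n \stackrel{\mathcal{D}}{\longrightarrow} \mathcal{W} + \zeta$ under $\nu_n$. Finally, contiguity propagates remainders: the relation $G_n = \norm{V_n}_{L^2}^2 + o_{\mathbb{P}}(1)$, being an $o_{\mathbb{P}}(1)$ statement under $\mu_n$, persists under $\nu_n$, so the continuous mapping theorem applied to the map $f \mapsto \norm{f}_{L^2}^2$ together with Slutsky's Lemma yields $G_n \stackrel{\mathcal{D}}{\longrightarrow} \norm{\mathcal{W} + \zeta}_{L^2}^2$.

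I expect the main obstacle to be the rigorous application of LeCam's Third Lemma in this infinite-dimensional setting, as the textbook statement is finite-dimensional. I would circumvent this by applying the classical lemma to the finite-dimensional distributions $(V_n(s_1), \dots, V_n(s_m), \log L_n)$ and then lifting the conclusion back to $L^2$ using the tightness of $V_n$ already guaranteed by the Hilbert-space central limit theorem. A secondary point needing care is that the linearisation of $V_n$ and the representation $G_n = \norm{V_n}_{L^2}^2 + o_{\mathbb{P}}(1)$ were derived under the null array; their validity under $\nu_n$ is precisely where contiguity, rather than a fresh computation, does the work.
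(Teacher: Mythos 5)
Your proposal is correct and follows essentially the same route as the paper: reuse the null linearisation $G_n = \lVert V_n\rVert_{L^2}^2 + o_{\mathbb{P}}(1)$ with $V_n \approx n^{-1/2}\sum_j \eta(X_{n,j},\cdot)$, compute the cross-covariance $\zeta$, apply LeCam's third lemma to the finite-dimensional distributions jointly with $\log L_n$, and lift back to $L^2$ via tightness and contiguity. Even your anticipated workaround for the infinite-dimensional setting (finite-dimensional projections via the multivariate Lindeberg--Feller CLT, then tightness) is exactly what the paper does.
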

\begin{proof}
	When interpreting $V_n$ from (\ref{V_n element L^2 for change of var}) as a random element
	\begin{equation*}
		\big((0, \infty)^n, \, \mathcal{B}_{>0}^n, \, \mu_n \big) \to \big((0, \infty), \, \mathcal{B}_{>0} \big) ,
	\end{equation*}
	we have seen in the proof of Theorem \ref{thm H_0 distr} (cf. (\ref{representation G})) that
	\begin{equation} \label{G_n is L^2 norm of V_n + asymp neglig term}
		G_n = \lVert V_n \rVert_{L^2}^2 + o_{\mu_n}(1) .
	\end{equation}
	Further, we have already established that
	\begin{equation*}
		\lVert V_n - \mathcal{W}_n^* \rVert_{L^2}^2 = o_{\mu_n}(1),
	\end{equation*}
	where $\mathcal{W}_n^*(s) = \tfrac{1}{\sqrt{n}} \sum_{j=1}^{n} W_{n,j}^*(s)$ and
	\begin{align*}
	W_{n,j}^* (s)
	= & ~ \left( - (k - 1) X_{n,j}^{-1} + 1 \right) \min \{ X_{n,j}, s \} - \mathds{1} \{ X_{n,j} \leq s \} \\
	& ~ + \Psi_1(X_{n,j}, k) \, \E \left[ X^{-1} \min \{ X, s \} \right] + \Psi_2(X_{n,j}, k) \, \E \big[ \min \{ X, s \} \big],
	\end{align*}
	with $(X_{n,1}, \dots, X_{n,n}) \stackrel{\mathcal{D}}{=} \mu_n$. Thus, by contiguity,
	\begin{equation} \label{asymp equiv under nu_n}
		\lVert V_n - \mathcal{W}_n^* \rVert_{L^2}^2 = o_{\nu_n}(1) .
	\end{equation}
	Using the boundedness of $c$ and assumption (\ref{regularity1 k}) on $\Psi_1$ and $\Psi_2$, we get, still under $\mu_n$,
	\begin{equation*}
		\lim\limits_{n \, \to \, \infty} \mathrm{Cov} \left( W_{n,1}^*(s), \, c(X_{n,1}) -  \tfrac{1}{2\sqrt{n}} \cdot c^2(X_{n,1}) \right)
		= \zeta(s) .
	\end{equation*}
	Now, for fixed $\ell \in \N$, we let $v \in \R^\ell$ and $s_1, \dots, s_\ell \in (0, \infty)$. Putting $\Sigma = \big( \mathcal{K}_k(s_i, s_j) \big)_{1 \leq i,j \leq \ell}$, where $\mathcal{K}_k$ is the covariance kernel of $\mathcal{W}$ figuring in Theorem \ref{thm H_0 distr}, and letting $\zeta_{\ell} = \big( \zeta(s_1), \dots, \zeta(s_{\ell}) \big)^\top$, the multivariate Lindeberg-Feller central limit theorem implies
	\begin{equation*}
		\left(\begin{array}{c} v_1 \, \mathcal{W}_n^*(s_1) + \dots + v_\ell \, \mathcal{W}_n^*(s_\ell) \\ \log (L_n) \end{array}\right)
		\xlongrightarrow{\mathcal{D}_{\mu_n}}
		\mathcal{N}_{2} \left(
		\left(\begin{array}{c} 0 \\ - \tfrac{\tau^2}{2} \end{array}\right),
		\left(\begin{array}{rr} v^\top \Sigma v \, & \, v^\top \zeta_\ell \\
		\zeta_\ell^\top v & \tau^2 \end{array}\right)
		\right).
	\end{equation*}
	Here, $\mathcal{N}_2$ denotes a bivariate normal distribution determined by its vector of expectations and its covariance matrix. LeCam's third Lemma (see eg. \cite{HSS:1999}, p.259, Lemma 2) yields the convergence of the finite-dimensional distributions of $\mathcal{W}_n^*$ to those of $\mathcal{W} + \zeta$ under $\nu_n$. As $\{\mathcal{W}_n^*\}_{n \in \N}$ is tight under $\mu_n$ and, thus, also under $\nu_n$, we have
	\begin{equation*}
		\mathcal{W}_n^* \xlongrightarrow{\mathcal{D}_{\nu_n}}
		\mathcal{W} + \zeta.
	\end{equation*}
	In combination with (\ref{G_n is L^2 norm of V_n + asymp neglig term}) and (\ref{asymp equiv under nu_n}) (and the contiguity) this finishes the proof.
\end{proof}

\section{Consistency}
\label{consistency}
In this section, we let $X$ be a non-negative, non-degenerate random variable with $\E|X| < \infty$, and $X_1, X_2, \dotso$ are i.i.d. copies of $X$. We assume that the estimators $\widehat{k}_n$, $\widehat{\lambda}_n$ are scale invariant and scale equivariant, respectively, as in Section \ref{Intro}. Furthermore, there exists $(k, \lambda) \in (0,\infty)^2$ so that
\begin{equation}
	(\widehat{k}_n, \widehat{\lambda}_n) \stackrel{\mathbb{P}}{\longrightarrow} (k, \lambda) \quad \text{as } n \to \infty .
\end{equation}
In view of the scale invariance of $G_n$, we assume $\lambda = 1$.
\begin{theorem} \label{thm consistency}
	As $n \to \infty$, we have
	\begin{align*}
		\frac{G_{n}}{n} \stackrel{\mathbb{P}}{\longrightarrow}
		\int_0^{\infty} \Big( \E \left[ \left( - (k - 1) X^{-1} + 1 \right) \min \{ X, t \} \right] - \mathbb{P}\big( X \leq t \big) \Big)^2 w(t) \, \mathrm{d}t .
	\end{align*}
\end{theorem}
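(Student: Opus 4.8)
The plan is to show that $G_n/n$ converges in probability to the displayed limit by proving convergence of the underlying integrand in the Hilbert space $L^2 = L^2\big((0,\infty),w(t)\,\mathrm{d}t\big)$. Writing
\[
	\Delta_n(t) = \frac{1}{n}\sum_{j=1}^{n}\left[\left(-\frac{\widehat{k}_n-1}{Y_{n,j}}+1\right)\min\{Y_{n,j},t\} - \mathds{1}\{Y_{n,j}\le t\}\right],
\]
we have $G_n/n = \norm{\Delta_n}_{L^2}^2$, and the asserted limit is exactly $\norm{\Delta_\infty}_{L^2}^2$ with $\Delta_\infty(t) = \E[(-(k-1)X^{-1}+1)\min\{X,t\}] - \mathbb{P}(X\le t)$. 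Since $f\mapsto\norm{f}_{L^2}^2$ is continuous, it suffices to prove $\norm{\Delta_n-\Delta_\infty}_{L^2}\to0$ in probability. I would introduce the intermediate element $\widetilde\Delta_n(t) = \frac1n\sum_j[(-(k-1)X_j^{-1}+1)\min\{X_j,t\} - \mathds{1}\{X_j\le t\}]$, built from the true parameters and the \emph{unscaled} observations, and treat the two summands of $\Delta_n-\Delta_\infty = (\Delta_n-\widetilde\Delta_n)+(\widetilde\Delta_n-\Delta_\infty)$ separately.

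For the second summand a direct law-of-large-numbers argument works. Since $\Delta_\infty(t) = \E[g(X,t)]$ with $g(x,t) = (-(k-1)x^{-1}+1)\min\{x,t\} - \mathds{1}\{x\le t\}$, Tonelli's theorem gives $\E\norm{\widetilde\Delta_n-\Delta_\infty}_{L^2}^2 = \frac1n\int_0^\infty\mathrm{Var}(g(X,t))\,w(t)\,\mathrm{d}t \le \frac1n\int_0^\infty\E[g(X,t)^2]\,w(t)\,\mathrm{d}t$. The crucial point is that although we assume only $\E|X|<\infty$ (and not $\E[X^{-1}]<\infty$), the singular term is harmless because $x^{-1}\min\{x,t\}\le1$; hence $|g(x,t)|\le|k-1|+t+1$ uniformly in $x$, so the prerequisite $\int_0^\infty(t^2+1)\,w(t)\,\mathrm{d}t<\infty$ makes the last integral finite. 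Thus $\E\norm{\widetilde\Delta_n-\Delta_\infty}_{L^2}^2 = O(1/n)$ and $\widetilde\Delta_n\to\Delta_\infty$ in $L^2$ in probability.

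The main work, and the principal obstacle, is to show $\norm{\Delta_n-\widetilde\Delta_n}_{L^2}\to0$ in probability, i.e.\ that replacing $(\widehat{k}_n,\widehat\lambda_n)$ and the rescaled data $Y_{n,j}=X_j/\widehat\lambda_n$ by their limits is asymptotically negligible; the difficulty is that $\widehat\lambda_n$ sits inside the discontinuous indicator and inside the $\min$ multiplied by the potentially ill-behaved factor $X_j^{-1}$, and that $\widehat\lambda_n$ is not independent of the $X_j$. I would split $\Delta_n-\widetilde\Delta_n$ into its indicator part and its $\min$ part. For the indicator part, $\frac1n\sum_j(\mathds{1}\{X_j\le\widehat\lambda_n t\}-\mathds{1}\{X_j\le t\})$ equals $\widehat{F}_n(\widehat\lambda_n t)-\widehat{F}_n(t)$ for the empirical distribution function $\widehat{F}_n$ of $X_1,\dots,X_n$; the Glivenko--Cantelli theorem reduces this uniformly to $F(\widehat\lambda_n t)-F(t)$, and since $c\mapsto\int_0^\infty(F(ct)-F(t))^2\,w(t)\,\mathrm{d}t$ is a deterministic function, continuous at $c=1$ with value $0$ there (by dominated convergence, the integrand being bounded by $w$), the continuous mapping theorem applied to $\widehat\lambda_n\to1$ closes this piece. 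For the $\min$ part I would substitute $Y_{n,j}=X_j/\widehat\lambda_n$ to rewrite the summand as $(-(\widehat{k}_n-1)X_j^{-1}+\widehat\lambda_n^{-1})\min\{X_j,\widehat\lambda_n t\}$ and then peel off the parameter change and the argument change in the $\min$: the parameter change contributes $|\widehat{k}_n-k|$ and $|\widehat\lambda_n^{-1}-1|$ times the $L^2$-norms of $\frac1n\sum_j X_j^{-1}\min\{X_j,\widehat\lambda_n\cdot\}$ and $\frac1n\sum_j\min\{X_j,\widehat\lambda_n\cdot\}$, which are bounded (again using $x^{-1}\min\{x,t\}\le1$ and $\min\{x,t\}\le t$ with the weight condition), while the argument change is controlled by the $1$-Lipschitz bound $|\min\{x,\widehat\lambda_n t\}-\min\{x,t\}|\le t\,|\widehat\lambda_n-1|$ together with the key uniform estimate $x^{-1}|\min\{x,\widehat\lambda_n t\}-\min\{x,t\}|\le|\widehat\lambda_n-1|/\min\{1,\widehat\lambda_n\}$, which tames the $X_j^{-1}$ factor. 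All resulting error terms are deterministic multiples of $\widehat{k}_n-k$, $\widehat\lambda_n^{-1}-1$, or $\widehat\lambda_n-1$ times quantities bounded in probability, so they vanish by the assumed consistency of the estimators; combining the pieces yields the claim.
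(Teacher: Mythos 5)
Your proof is correct, and it takes a genuinely different route from the paper's. The paper starts with the change of variable $s=\widehat{\lambda}_n t$, which pulls the scale estimator out of the indicators and minima: $G_n/n=\widehat{\lambda}_n^{-1}\int_0^\infty V_n^{*2}(s)\,w\big(\widehat{\lambda}_n^{-1}s\big)\,\mathrm{d}s$, where $V_n^*$ is built from the raw observations $X_j$ and $\widehat{\lambda}_n^{-1}$ enters only multiplicatively; it then invokes Glivenko--Cantelli-type uniform almost sure convergence of $n^{-1}\sum_j X_j^{-1}\min\{X_j,s\}$, of $n^{-1}\sum_j\min\{X_j,s\}$ and of the empirical distribution function, disposes of the perturbed weight $w\big(\widehat{\lambda}_n^{-1}s\big)$ by an argument ``similar to (\ref{evade parameter in weight fct})'', and finally changes variables back. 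You never change variables: you split $\Delta_n-\Delta_\infty$ into a pure law-of-large-numbers part, settled by the mean-square bound $\E\norm{\widetilde{\Delta}_n-\Delta_\infty}_{L^2}^2=O(1/n)$ (valid because $x^{-1}\min\{x,t\}\le 1$ and (\ref{prerequ weight fct}) holds), and an estimator-perturbation part, controlled by the Lipschitz estimate for $\min$, your uniform bound $x^{-1}\big|\min\{x,ct\}-\min\{x,t\}\big|\le |c-1|/\min\{1,c\}$, and, for the indicators, Glivenko--Cantelli together with continuity at $c=1$ of the deterministic map $c\mapsto\int_0^\infty\big(F(ct)-F(t)\big)^2 w(t)\,\mathrm{d}t$ (which holds by dominated convergence, since $F$ has at most countably many discontinuities). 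Comparing the two: the paper's substitution trick is shorter, recycles the computation (\ref{representation G}) from the null-distribution proof, and immediately gives the almost sure version mentioned after the theorem once the estimators converge almost surely; your argument, on the other hand, uses nothing about $w$ beyond (\ref{prerequ weight fct}) and completely avoids the weight-ratio condition $\sup_{s>0}\big|1-w\big(\widehat{\lambda}_n^{-1}s\big)/w(s)\big|\stackrel{\PP}{\longrightarrow}0$, a condition that is not implied by (\ref{prerequ weight fct}) and, taken literally, fails for the paper's preferred weight $w_a(t)=e^{-at}$ (whenever $\widehat{\lambda}_n\neq 1$ that supremum is at least of order one, indeed infinite when $\widehat{\lambda}_n>1$), so that step of the paper actually requires a truncation repair of exactly the kind your decomposition supplies; the mild price you pay is that the Markov-inequality step yields only convergence in probability, so upgrading your version of the theorem to almost sure convergence would require replacing the variance bound by a Glivenko--Cantelli or Hilbert-space strong-law argument.
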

\begin{proof}
	By a change of variable, we get
	\begin{align*}
		\frac{G_{n}}{n}
		= \widehat{\lambda}_n^{-1} \int_0^{\infty} V^{* 2}_n(s) \, w\big(\widehat{\lambda}_n^{-1} s\big) \, \mathrm{d}s,
	\end{align*}
	where
	\begin{align*}
		V^*_n(s)
		= \frac{1}{n} \sum\limits_{j=1}^{n} \big( - (\widehat{k}_n - 1) X_j^{-1} + \widehat{\lambda}_n^{-1} \big) \min \{ X_j, s \} - \mathds{1} \{ X_j \leq s \} .
	\end{align*}
	Next, standard Glivenko-Cantelli arguments ensure both
	\begin{equation*}
		\sup\limits_{s \, > \, 0} \left| \frac{1}{n} \sum\limits_{j=1}^{n} X_j^{-1} \min\{ X_j, s \} - \E \big[ X^{-1} \min\{ X, s \} \big] \right|
		\longrightarrow 0 \quad \mathbb{P}\text{-a.s.}
	\end{equation*}
	and
	\begin{equation*}
		\sup\limits_{s \, > \, 0} \left| \frac{1}{n} \sum\limits_{j=1}^{n} \min\{ X_j, s \} - \E \big[ \min\{ X, s \} \big] \right|
		\longrightarrow 0 \quad \mathbb{P}\text{-a.s. }
	\end{equation*}
	as $n\rightarrow\infty$, since all functions involved are continuous and increasing, and the deterministic functions have compact ranges. Thus, an argument similar to (\ref{evade parameter in weight fct}), the assumption $(\widehat{k}_n, \widehat{\lambda}_n) \stackrel{\mathbb{P}}{\longrightarrow} (k, 1)$ and the Glivenko-Cantelli theorem give
	\begin{align*}
		\frac{G_n}{n} \stackrel{\mathbb{P}}{\longrightarrow}
		\int_0^{\infty} \Big( \E \left[ \big( - (k - 1) X^{-1} + 1 \big) \min \{ X, s \} \right] - \mathbb{P} \big( X \leq s \big) \Big)^2 w(s) \, \mathrm{d}s
	\end{align*}
	as $n \to \infty$. Another change of variable yields the claim.
\end{proof}

\vspace{3mm}
Note that Theorem \ref{thm fixed point statement} implies that the limit figuring in Theorem \ref{thm consistency} is positive if $X$ has a non-Gamma distribution. Hence, a test based on $G_n$ is consistent against each alternative with existing expectation and the further assumption discussed in the following remark. From the proof it is evident that, if $(\widehat{k}_n, \widehat{\lambda}_n)$ converges to $(k, \lambda)$ almost surely, we also have almost sure convergence in the theorem. \\ \\
\begin{remark}
	From (\ref{likelihood equation k}) we deduce that, if maximum likelihood estimation is employed, we have $\widehat{k}_n \to k$ $\mathbb{P}$-a.s. where $k$ is the solution of
	\begin{equation} \label{theoretical likelihood eq k}
		\log (k) - \psi (k) = \log (\E X) - \E [\log (X)] .
	\end{equation}
	Under the further assumption $\E |\log (X)| < \infty$ we can assure that
	\begin{equation*}
		0 < \log (\E X) - \E [\log (X)] < \infty
	\end{equation*}
	since $X$ is non-degenerate. We define $h(x) = \log(x) - \psi(x)$, for $x > 0$. From
	\begin{equation*}
		- \frac{1}{2 x} - \frac{1}{x^2}
		< \psi(x) - \log(x)
		< - \frac{1}{2 x},
		\quad  x > 0,
	\end{equation*}
	(cf. 6.3.21 in \cite{AS:1965}), we derive
	\begin{equation*}
		\frac{1}{2 x}
		< h(x)
		< \frac{1}{2 x} + \frac{1}{x^2},
		\quad  x > 0.
	\end{equation*}
	Together with the continuity and monotonicity of $h$ this gives a unique solution $0 < k < \infty$ of (\ref{theoretical likelihood eq k}). Additionally, $\widehat{\lambda}_n = \widehat{k}_n^{-1} \, \overline{X}_n \to \lambda = k^{-1} \, \E X \in (0, \infty)$ which shows that Theorem \ref{thm consistency} is valid for the maximum likelihood estimators. Note that the consistency of our test holds without requiring the existence of the second moment (as e.g. the test in \cite{HME:2012}). \\
	If, instead, moment estimation is employed, we need the condition on the second moment to assure the applicability of Theorem \ref{thm consistency} since
	\begin{equation*}
		\widehat{k}_n \longrightarrow \frac{(\E X)^2}{\mathbb{V}(X)} \quad \text{and} \quad \widehat{\lambda}_n \longrightarrow \frac{\mathbb{V}(X)}{\E X} .
	\end{equation*}
\end{remark}

\section{Monte Carlo Simulation}\label{empirical}
The finite-sample power performance of our test based on $G_{n,a}$ from (\ref{computational form}) is compared to several competitors by means of a Monte Carlo simulation study. All simulations were performed using the statistical computing environment {\tt R}, see \cite{Rstat:2018}. The simulation study was conducted at a $5\%$ nominal level with each entry in Table \ref{Tab.n20} and \ref{Tab.n50} representing the percentage of empirical rejection rates based on $10~000$ replications. Since the limit null distribution depends on the shape parameter $k$, we used the parametric bootstrap procedure proposed in \cite{HME:2012}, section 3, to obtain critical values. In each Monte Carlo run, we computed the approximate maximum likelihood estimator $\widehat{\lambda}_n=\overline{X}_n/\widehat{k}_n$ as in (\ref{likelihood equation lambda}) and
\begin{equation*}
\widehat{k}_n=\left\{\begin{array}{cc} R_n^{-1}\big(0.500876+0.1648852R_n-0.0544274 R_n^2\big), & 0<R_n\le0.5772,\\[1mm]
R_n^{-1}\big(17.79728+11.968477R_n+R_n^2\big)^{-1}\big(8.898919&\\
+9.059950R_n+0.9775373R_n^2\big), & 0.5772< R_n\le 17,\\[1mm]
1/R_n, & R_n>17,\end{array}\right.
\end{equation*}
where $R_n=\log{\overline{X}_n}-\frac1n\sum_{j=1}^n\log X_j$ is the logarithmic ratio of the arithmetic and geometric mean of $X_1,\ldots,X_n$, see \cite{JKB:1994} section 7.2 or \cite{B:2001}. We then generated 500 bootstrap samples from the $\Gamma(\widehat{k}_n,1)$-law and calculated the critical value, as suggested in \cite{GH:2000}, which is given by
\begin{equation*}
	\tilde{c}_B=T^*_{(475)}+0.95\left(T^*_{(476)}-T^*_{(475)}\right).
\end{equation*}
Here, $T^*_{(j)}$ denotes the $j$-th order statistic, $j=1,\ldots,500$, of the bootstrap sample of values of the test statistic $T_1^*,\ldots,T_{500}^*$. This method leads to a more accurate empirical level of the test. In view of the heavy computations due to the bootstrap procedure, we confined the sample sizes to $n=20$ and $n=50$. We didn't consider moment estimators since we expect the same behaviour of the tests as in \cite{HME:2012}, where problems in adhering the significance level are reported.

We compare our new test with classical procedures based on the empirical distribution function, like the Kolmogorov-Smirnov test (KS), Cram\'{e}r-von Mises test (CM), the Anderson-Darling test (AD), and the Watson test (WA), see \cite{delBarrio2000}, section 3, and for a description, see section 3 of \cite{HME:2012}. In addition we implemented the following tests based on the empirical Laplace transform. Writing
\begin{equation*}
\mathcal{Z}_n(t)=\sqrt{n}\left(\frac{\widehat{k}_n}{n}\sum_{j=1}^n\exp(-tY_j)-\frac{1+t}{n}\sum_{j=1}^nY_j\exp(-tY_j)\right),
\end{equation*}
where $Y_j$ is shorthand for $Y_{n,j}=X_j/\overline{X}_n$, the authors of \cite{HME:2012} propose, for a tuning parameter $a>0$, the statistics
\begin{eqnarray*}
T^{(1)}_{n,a} &=&  \int_0^\infty \mathcal{Z}^2_n(t)\exp{(-at)} \: \mbox{d}t\\
&=&\frac
{1}{n} \sum_{j,k=1}^n \left[\frac{Y_j Y_k-\widehat
{k}_n(Y_j+Y_k)+\widehat {k}^2_n}{Y_j+Y_k+a}+
\frac{2Y_jY_k-\widehat {k}_n(Y_j+Y_k)}{(Y_j+Y_k+a)^2}\right.\\&&
~~~+\left.\frac{2Y_jY_k}{(Y_j+Y_k+a)^3}\right]
\end{eqnarray*}
and
\begin{eqnarray*}
T^{(2)}_{n,a} & = &\int_0^\infty \mathcal{Z}^2_n(t)\exp{(-at^2)} \: \mbox{d}t\\
&=&\frac{1}{2n}   \sqrt{\frac{\pi}{a}}
\sum_{j,k=1}^n\left[Y_jY_k+\widehat
{k}^2_n-\widehat{k}_n(Y_j+Y_k)\right]\varphi_{jk}(a)
\\ & & \
+\  \frac{1}{2n}\frac{1}{2a}
\sum_{j,k=1}^n\left[2Y_jY_k-\widehat{k}_n(Y_j+Y_k)\right]
\left[2-\sqrt{\frac{\pi}{a}}(Y_j+Y_k)\varphi_{jk}(a)\right] \\ & &
\ +\  \frac{1}{2n}\frac{1}{4a^2} \sum_{j,k=1}^n
Y_jY_k\left[\left \{\sqrt{\frac{\pi}{a}}(Y_j+Y_k)^2+2\sqrt{\pi a}\right\}
\varphi_{jk}(a)-2(Y_j+Y_k)\right],
\end{eqnarray*}
where
\begin{equation*}
\varphi_{jk}(a)\ = \ \left[1-\Phi  \left( \frac{Y_j+Y_k}{2\sqrt{a}}\right)\right] \exp\!
\left(\frac{(Y_j+Y_k)^2}{4a}\right)
\end{equation*}
and $\Phi(x)=\frac{2}{\sqrt{\pi}}\int_{0}^x\exp\left(-t^2\right)\mbox{dt}$ denotes the error function. To obtain critical values the same parametric bootstrap procedure as referenced above has been used. We chose $T^{(1)}_{n,1}$ and $T^{(2)}_{n,4}$ as representatives for the simulation study, since these choices of the tuning parameter $a$ are recommended by the authors.

For easy comparison to the existing simulation study in \cite{HME:2012}, we considered the same alternative distributions and restate them here (each density being defined for $x > 0$):
\begin{itemize}
	\item[$\bullet$] {the Weibull distribution $W(\vartheta)$ with density $\vartheta x^{\vartheta -1}  \exp(-x^\vartheta)$,}
	\item[$\bullet$] {the inverse Gaussian law $IG(\vartheta)$ with density
	$(\vartheta/2\pi)^{1/2} x^{-3/2}\exp[-\vartheta(x-1)^2/2x]$,}
	\item[$\bullet$] {the lognormal law $LN(\vartheta)$ with density $(\vartheta x)^{-1}  (2\pi)^{-1/2} \exp[-(\log x)^2/(2\vartheta^2)]$,}
	\item[$\bullet$] {the power distribution $PW(\vartheta)$ with density $\vartheta^{-1}   x^{(1-\vartheta)/\vartheta}$, $0 < x \le 1$,}
	\item[$\bullet$] {the shifted-Pareto distribution $SP(\vartheta)$ with density $\vartheta/(1+x)^{1+\vartheta}$,}
	\item[$\bullet$] {the Gompertz law $GO(\vartheta)$ with distribution function
	$1-\exp[\vartheta^{-1}(1-e^x)]$,}
	\item[$\bullet$] {the linear increasing failure rate law $LF(\vartheta)$  with density  $(1+\vartheta x)\exp(-x-\vartheta x^2/2)$.}
\end{itemize}

The best performing tests for each distribution and sample size in Tables \ref{Tab.n20} and \ref{Tab.n50} have been highlighted for easy reference. It is apparent that our tests outperform the classical procedures in most cases (except for the $W(0.5)$, the $PW(4)$ and $SP(1)$ alternatives), although they show a conservative performance under the null hypothesis, i.e. they do not fully exploit the significance level. While $T_{n,1}^{(1)}$ and $T_{n,4}^{(2)}$ have very good power for most alternatives, our new procedures are not too far away or are even better for a reasonable choice of the tuning parameter $a$. When comparing the power of $G_{n,a}$ for the two sample sizes $n=20$ and $n=50$, it seems that a smaller choice of $a$ increases the power, when the tests are among the best procedures, while for some alternatives, e.g. the $LN(1.5)$-law, the performance increases for greater values of $a$.
\begin{table}[t]
\centering
\setlength{\tabcolsep}{.6mm}
\begin{tabular}{rrrrrrrrrrrrrrr}
  \hline
 & $G_{n,0.1}$ & $G_{n,0.25}$ & $G_{n,0.5}$ & $G_{n,0.75}$ & $G_{n,1}$ & $G_{n,1.5}$ & $G_{n,2}$ & $G_{n,3}$ & $T_{n,1}^{(1)}$ & $T_{n,4}^{(2)}$ & $KS$ & $CM$ & $AD$ & $WA$ \\
  \hline
$\Gamma(0.25)$ & 3 & 3 & 3 & 3 & 3 & 3 & 3 & 3 & 4 & 3 & 5 & 5 & 5 & 5 \\
  $\Gamma(0.5)$ & 3 & 3 & 3 & 4 & 4 & 5 & 5 & 5 & 5 & 4 & 5 & 5 & 5 & 5 \\
  $\Gamma(1)$ & 3 & 4 & 5 & 5 & 5 & 5 & 5 & 5 & 5 & 5 & 5 & 5 & 5 & 5 \\
  $\Gamma(5)$ & 5 & 6 & 5 & 3 & 3 & 2 & 2 & 3 & 6 & 5 & 5 & 5 & 5 & 5 \\
  $\Gamma(10)$ & 6 & 5 & 2 & 2 & 2 & 3 & 3 & 3 & 5 & 1 & 5 & 5 & 5 & 5 \\
  \hline $W(0.5)$ & 7 & 8 & 9 & 7 & 9 & 8 & 9 & 8 & {\bf 15} & {\bf 15} & 11 & 12 & 12 & 11 \\
  $W(1.5)$ & 5 & 6 & 7 & {\bf 8} & {\bf 8} & 7 & 7 & 5 & {\bf 8} & {\bf 8} & 6 & 7 & 7 & 6 \\
  $W(3)$ & 15 & 17 & 16 & 13 & 13 & 11 & 12 & 12 & {\bf 18} & 9 & 11 & 12 & 13 & 10 \\
  $IG(0.5)$ & 35 & 37 & 40 & 41 & 43 & 43 & 42 & 40 & {\bf 44} & {\bf 44} & 30 & 37 & 39 & 32 \\
  $IG(1.5)$ & 20 & 21 & 21 & {\bf 22} & 20 & 17 & 13 & 10 & 20 & 19 & 14 & 16 & 17 & 14 \\
  $IG(3)$ & 13 & {\bf 14} & 13 & 10 & 8 & 4 & 3 & 1 & 12 & 9 & 9 & 10 & 11 & 9 \\
  $LN(0.5)$ & {\bf 13} & {\bf 13} & 11 & 8 & 6 & 4 & 2 & 1 & 11 & 8 & 9 & 9 & 10 & 8 \\
  $LN(0.8)$ & 21 & {\bf 22} & {\bf 22} & 21 & 20 & 17 & 15 & 11 & 21 & {\bf 22} & 14 & 16 & 17 & 14 \\
  $LN(1.5)$ & 37 & 38 & 38 & 39 & 39 & 41 & 41 & 42 & {\bf 46} & 43 & 32 & 38 & 39 & 34 \\
  $PW(1)$ & 45 & {\bf 50} & {\bf 50} & 49 & 45 & 40 & 34 & 29 & 44 & 42 & 33 & 42 & 47 & 37 \\
  $PW(2)$ & 22 & 25 & 28 & 30 & {\bf 31} & {\bf 31} & 29 & 24 & 28 & 27 & 20 & 25 & 28 & 22 \\
  $PW(4)$ & 6 & 7 & 8 & 8 & 10 & 10 & 12 & 10 & {\bf 16}  & 10 & 11 & 13 & 15 & 12 \\
  $SP(1)$ & 38 & 40 & 43 & 40 & 45 & 43 & 47 & 45 & {\bf 58} & 56 & 46 & 52 & 52 & 47 \\
  $SP(2)$ & 26 & 26 & 26 & 27 & 26 & 27 & 26 & 25 & 28 & {\bf 30} & 19 & 23 & 23 & 20 \\
  $GO(2)$ & 20 & 25 & 28 & {\bf 29} & 28 & 26 & 24 & 20 & 28 & 28 & 18 & 21 & 23 & 18 \\
  $GO(4)$ & 34 & 39 & {\bf 41} & {\bf 41} & 39 & 35 & 32 & 28 & {\bf 41} & 37 & 25 & 31 & 33 & 26 \\
  $LF(2)$ & 7 & 9 & 12 & 13 & 13 & 13 & 12 & 11 & {\bf 14} & {\bf 14} & 9 & 10 & 11 & 9 \\
  $LF(4)$ & 9 & 12 & 15 & 17 & 16 & 16 & 15 & 13 & {\bf 18} & {\bf 18} & 11 & 12 & 13 & 10 \\
   \hline
\end{tabular}
\caption{Empirical rejection rates for $n=20$, $\alpha=0.05$ ($10~000$ replications)}\label{Tab.n20}
\end{table}

\begin{table}[t]
\centering
\setlength{\tabcolsep}{.6mm}
\begin{tabular}{rrrrrrrrrrrrrrr}
  \hline
 & $G_{n,0.1}$ & $G_{n,0.25}$ & $G_{n,0.5}$ & $G_{n,0.75}$ & $G_{n,1}$ & $G_{n,1.5}$ & $G_{n,2}$ & $G_{n,3}$ & $T_{n,1}^{(1)}$ & $T_{n,4}^{(2)}$ & $KS$ & $CM$ & $AD$ & $WA$ \\
  \hline
  $\Gamma(0.25)$ & 4 & 4 & 4 & 4 & 4 & 4 & 4 & 4 & 4 & 4 & 5 & 4 & 5 & 5 \\
  $\Gamma(0.5)$ & 4 & 4 & 4 & 5 & 5 & 5 & 5 & 5 & 5 & 5 & 5 & 5 & 5 & 5 \\
  $\Gamma(1)$ & 5 & 5 & 5 & 5 & 5 & 6 & 6 & 6 & 5 & 5 & 5 & 5 & 5 & 5 \\
  $\Gamma(5)$ & 5 & 6 & 5 & 4 & 3 & 3 & 3 & 3 & 5 & 6 & 5 & 5 & 5 & 5 \\
  $\Gamma(10)$ & 5 & 5 & 3 & 3 & 3 & 3 & 4 & 4 & 5 & 0 & 5 & 5 & 5 & 5 \\
  \hline $W(0.5)$ & 14 & 17 & 20 & 15 & 22 & 17 & 22 & 17 & 32 & {\bf 33} & 21 & 26 & 26 & 21 \\
  $W(1.5)$ & 9 & 10 & 11 & {\bf 12} & {\bf 12} & 11 & 11 & 9 & {\bf 12} & {\bf 12} & 8 & 9 & 10 & 8 \\
  $W(3)$ & 36 & {\bf 37} & 34 & 32 & 30 & 32 & 33 & 33 & {\bf 37} & 14 & 19 & 23 & 26 & 18 \\
  $IG(0.5)$ & 79 & 81 & 83 & 84 & 85 & 85 & 85 & 83 & {\bf 88} & 84 & 66 & 78 & 82 & 70 \\
  $IG(1.5)$ & 48 & 49 & 50 & 50 & 48 & 44 & 40 & 34 & {\bf 52} & 46 & 30 & 37 & 41 & 30 \\
  $IG(3)$ & {\bf 29} & {\bf 29} & 28 & 25 & 21 & 16 & 14 & 11 & {\bf 29} & 14 & 17 & 20 & 22 & 17 \\
  $LN(0.5)$ & {\bf 27} & 26 & 22 & 19 & 15 & 11 & 10 & 8 & 24 & 8 & 14 & 17 & 18 & 14 \\
  $LN(0.8)$ & {\bf 49} & {\bf 49} & 48 & 47 & 45 & 40 & 36 & 30 & {\bf 49} & 45 & 29 & 36 & 38 & 29 \\
  $LN(1.5)$ & 78 & 79 & 80 & 80 & 81 & 81 & 81 & 79 & {\bf 86} & 82 & 68 & 78 & 80 & 70 \\
  $PW(1)$ & {\bf 95} & 94 & 93 & 90 & 87 & 79 & 72 & 61 & 88 & 86 & 72 & 86 & 92 & 80 \\
  $PW(2)$ & 78 & {\bf 79} & 78 & 78 & 75 & 71 & 65 & 53 & 64 & 71 & 48 & 60 & 69 & 53 \\
  $PW(4)$ & 10 & 12 & 17 & 14 & 20 & 16 & 22 & 16 & 40 & {\bf 41} & 22 & 28 & 34 & 25 \\
  $SP(1)$ & 71 & 74 & 77 & 73 & 80 & 76 & 81 & 77 & {\bf 93} & 92 & 85 & 90 & 90 & 86 \\
  $SP(2)$ & 58 & 58 & 58 & 57 & 57 & 54 & 52 & 48 & {\bf 59} & {\bf 59} & 42 & 49 & 50 & 42 \\
  $GO(2)$ & 61 & 63 & {\bf 64} & 63 & 61 & 55 & 50 & 42 & 63 & {\bf 64} & 42 & 52 & 56 & 42 \\
  $GO(4)$ & 82 & {\bf 83} & 82 & 80 & 77 & 72 & 67 & 62 & 82 & 79 & 58 & 71 & 75 & 61 \\
  $LF(2)$ & 22 & 24 & 27 & 27 & 27 & 26 & 24 & 21 & 30 & {\bf 31} & 18 & 21 & 22 & 16 \\
  $LF(4)$ & 28 & 30 & 33 & 34 & 34 & 31 & 29 & 25 & 37 & {\bf 38} & 21 & 26 & 28 & 20 \\
   \hline
\end{tabular}
\caption{Empirical rejection rates for $n=50$, $\alpha=0.05$ ($10~000$ replications)}\label{Tab.n50}
\end{table}

\section{Conclusions and Outlook}\label{concl}
We established a new characterization of the Gamma distribution by introducing a transformation related to a Stein-type identity. This novel relation can be interpreted as a vast generalization of the characterization of the exponential distribution via the mean residual life function. Based on the explicit representation of our transformation, we proposed a new family of universally consistent goodness-of-fit tests for the Gamma distribution, which can be computed efficiently. We derived the asymptotic distribution under the null hypothesis and under a certain type of contiguous alternatives. Applying a parametric bootstrap technique, we included simulations to show that for an appropriate choice of the tuning parameter, the tests are competitive to existing procedures. It would be beneficial for the application of the test to choose an optimal tuning parameter depending on the data, perhaps using the method suggested in \cite{AS:2015}. An interesting open question is the behaviour of the tests under fixed alternatives: Do we have
\begin{equation*}
\sqrt{n} \left(\frac{G_{n,a}}{n}-\Delta_{k}\right) \stackrel{\mathcal{D}}{\longrightarrow}  \mathcal{N} (0,\sigma_{k}^2) \quad \text{as } n \to \infty,
\end{equation*}
under suitable moment conditions, for some $\sigma_{k}^2>0$. Here,
\begin{equation*}
\Delta_{k} = \int_0^{\infty} \Big( \E \left[ \left( - (k - 1) X^{-1} + 1 \right) \min \{ X, t \} \right] - \mathbb{P}\big( X \leq t \big) \Big)^2 w(t) \, \mathrm{d}t
\end{equation*}
is the constant arising as the limit in Theorem \ref{thm consistency}. This result would open ground to equivalence tests and related theory proposed in \cite{BEH:2017,BH:2017}.

\section{Acknowledgements}
The authors thank Norbert Henze for fruitful discussions and ingenious comments.

\begin{appendix}
\section{Proof of Theorem \ref{thm SC}}\label{apx}
Note that if $X \sim \Gamma(k, \lambda)$, $X$ has Lebesgue density $p = p(\cdot, k, \lambda)$, and the fundamental theorem of calculus implies
\begin{align*}
	\E \left[ f^{\prime}(X) + \left( \frac{k-1}{X} - \frac{1}{\lambda} \right) f(X) \right]
	&= \E \left[ f^{\prime}(X) + \frac{p^{\prime}(X)}{p(X)} \, f(X) \right] \\
	&= \int_0^{\infty} \frac{\mathrm{d}}{\mathrm{d}x} \Big( f(x) p(x) \Big) \, \mathrm{d}x \\
	&= \lim_{x \, \to \, \infty} f(x) p(x) - \lim_{x \, \searrow \, 0} f(x) p(x) \\
	&= 0
\end{align*}
for any $f \in \mathcal{F}$. For the converse, define $f_t : (0, \infty) \to \R$ by
\begin{equation*}
	f_t(x) = \frac{1}{p(x)} \int_0^{x} \Big( \mathds{1}_{(0, t]} (s) - P(t) \Big) p(s) \, 	\mathrm{d}s,\quad  t>0,
\end{equation*}
where $P(t) = \int_0^t p(s) \, \mathrm{d}s$ is the distribution function of the $\Gamma(k, \lambda)$-law. Apparently, $f_t$ is differentiable with
\begin{align} \label{deriv}
	f_t^{\prime}(x)
	&= - \frac{p^{\prime}(x)}{p^2(x)} \int_0^{x} \Big( \mathds{1}_{(0, t]} (s) - P(t) \Big) p(s) \, \mathrm{d}s + \frac{1}{p(x)} \Big( \mathds{1}_{(0, t]} (x) - P(t) \Big) p(x) \nonumber \\
	&= - \left( \frac{k - 1}{x} + \frac{1}{\lambda} \right) f_t(x) + \mathds{1}_{(0, t]} (x) - P(t)
\end{align}
and, since $\left( (k-1)x^{-1} + \lambda^{-1} \right) f_t(x)$ is continuous for $x>0$ and $\mathds{1}_{(0, t]} - P(t)$ is bounded, $f_t^{\prime}$ is locally integrable. Further,
\begin{equation*}
	\lim_{x \, \to \, \infty} f_t(x) p(x) = \int_0^{\infty} \Big( \mathds{1}_{(0, t]} (s) - P(t) \Big) p(s) \, \mathrm{d}s = P(t) - P(t) = 0 .
\end{equation*}
Noting that, for $x < t$, the function takes the form
\begin{equation} \label{form of f_t if x < t}
	f_t(x) = \frac{1}{p(x)} P(x) \big( 1 - P(t) \big) ,
\end{equation}
we infer $\lim_{x \, \searrow \, 0} f_t(x) p(x) = \lim_{x \, \searrow \, 0} P(x) \big( 1 - P(t) \big) = 0$. Arguing that
\begin{equation*}
	0 \leq \frac{1}{p(x)} \, P(x)
	= \int_0^x \left( \frac{s}{x} \right)^{k - 1} e^{\lambda^{-1} (x - s)} \, \mathrm{d}s
	= x e^{\lambda^{-1} x} \int_0^1 z^{k - 1} e^{- \lambda^{-1} x z} \, \mathrm{d}z
	\leq \frac{x e^{\lambda^{-1} x}}{k},
\end{equation*}
equation (\ref{form of f_t if x < t}) also implies $\lim_{x \searrow \, 0} f_t(x) = 0$ and therefore $f_t \in \mathcal{F}$. Thus, the assumption and (\ref{deriv}) yield
\begin{equation*}
	0
	= \E \left[ f_t^{\prime}(X) + \left( \frac{k-1}{X} - \frac{1}{\lambda} \right) f_t(X) \right]
	= \mathbb{P}(X \leq t) - P(t).
\end{equation*}
As $t$ was arbitrary, $X$ follows the $\Gamma(k, \lambda)$-law. \hfill $\square$ \\ \\
\begin{remark}
Since $f_t^{\prime}(x) + \left( (k - 1)^{-1} - \lambda^{-1} \right) f_t(x)$ is bounded for $x > 0$ and the proof depends on $\mathcal{F}$ solely through $f_t$, we can assume that, for any $f \in \mathcal{F}$, the function
\begin{equation*}
	f^{\prime}(x) + \left( (k - 1)x^{-1} - \lambda^{-1} \right) f(x)
\end{equation*}
is integrable with respect to any probability measure and hence the stated expectations exist. We also realize that the requirement $\lim_{x \, \searrow \, 0} f(x) = 0$ for functions in $\mathcal{F}$ was not yet needed. Still, as the last step in the proof of Theorem \ref{thm fixed point statement} relies on this assumption, we had to include it in the characterization given in Theorem \ref{thm SC}.
\end{remark}

\end{appendix}

\bibliography{lit-gamma}
\bibliographystyle{abbrv}      

S. Betsch and B. Ebner, Institute of Stochastics, Karlsruhe Institute of Technology (KIT), Englerstr. 2, D-76133 Karlsruhe:
\\
{\texttt Steffen.Betsch@student.kit.edu} \ \ {\texttt Bruno.Ebner@kit.edu}

\end{document}